\documentclass{article}

%% -- include necessary packages
%% -- for German umlauts
\usepackage[ansinew]{inputenc}
\usepackage{amssymb,amsmath,amsthm}
\usepackage{graphicx}
\usepackage{url}
\usepackage[hmargin=1.64in]{geometry}
\usepackage{tabularx}
\usepackage{tikz}
\usetikzlibrary{calc}
\usetikzlibrary{decorations.markings}
\usepackage{subfig}

\usepackage{xspace}
\usepackage{vmargin}
% \setmarginsrb{left}{top}{right}{bottom}{headhgt}{headsep}{foothgt}{footskip}
\setmarginsrb{1.1in}{1.1in}{1.1in}{1.1in}{0cm}{0cm}{8mm}{8mm}

%\pgfrealjobname{draft 090430}

\newtheorem{theorem}{Theorem}

\newtheorem{observation}[theorem]{Observation}

\newtheorem{lemma}[theorem]{Lemma}

\newtheorem{boldclaim}{Claim}
\newtheorem{corol}{Corollary}
%\newtheorem{lemma}{Lemma}
%\newproof{proof}{Proof}
%\renewcommand{\thetheorem}{\arabic{theorem}.}

\newcommand{\fvstub}{1.6740}
\newcommand{\FVS}{FVS\xspace}
\newcommand{\FVSs}{FVSs\xspace}

%% -- set global document options
%% -- for equal spacing between words and sentence separators
%\frenchspacing

%\sloppy

% \title{On Feedback Vertex Sets in Tournaments}
%
% \author{
% Serge Gaspers\thanks{CMM, Universidad de Chile, Santiago de Chile. \texttt{sgaspers@dim.uchile.cl}}
% \and
% Matthias Mnich\thanks{TU Eindhoven, Faculteit Wiskunde en Informatica, Eindhoven, The Netherlands, \texttt{m.mnich@tue.nl}}
% }
%
% \date{}

%% -- begin with content

\title{Feedback Vertex Sets in Tournaments\thanks{Part of this research has been supported by the Netherlands Organisation for Scientific Research (NWO), grant 639.033.403. A preliminary version of this article appeared in the proceedings of ESA 2010 \cite{GaspersM10}.}}

\author{
 Serge Gaspers\thanks{%
  Inst.\ of Information Systems,
  Vienna University of Technology,
  Vienna, Austria.
  E-mail: \texttt{gaspers@kr.tuwien.ac.at}}
 \and Matthias Mnich\thanks{%
  Technische Universiteit Eindhoven, Eindhoven, The Netherlands.
  E-mail: \texttt{m.mnich@tue.nl}}
}

\date{}

\begin{document}

\maketitle

\begin{abstract}
We study combinatorial and algorithmic questions around minimal feedback vertex sets in tournament graphs.

On the combinatorial side, we derive upper and lower bounds on the maximum number of minimal feedback vertex sets in an $n$-vertex tournament.
We prove that every tournament on $n$ vertices has at most $\fvstub^n$ minimal feedback vertex sets, and that there is an infinite family of tournaments, all having at least $1.5448^n$ minimal feedback vertex sets.
This improves and extends the bounds of Moon (1971).

On the algorithmic side, we design the first polynomial space algorithm that enumerates the minimal feedback vertex sets of a tournament with polynomial delay.
The combination of our results yields the fastest known algorithm for finding a minimum size feedback vertex set in a tournament.
\end{abstract}

\paragraph{Keywords.}
Algorithms and data structures, tournaments, feedback vertex set, polynomial delay, combinatorial bounds.

\section{Introduction}
\label{sec:introduction}
A tournament $T=(V,A)$ is a directed graph with exactly one arc between every pair of vertices.
%, and a bipartite tournament $T$ is a directed bipartite graph with exactly one arc between every pair of vertices in different classes of the vertex set partition.
A feedback vertex set (\FVS) of $T$ is a subset of its vertices whose deletion makes $T$ acyclic.
A minimal \FVS of $T$ is a \FVS of $T$ that is minimal with respect to vertex-inclusion.
The complement of a minimal \FVS~$F$ induces a maximal acyclic subtournament whose unique vertex with no in-neighbor is a ``Banks winner''~\cite{Banks1985}: identifying the vertices of $T$ with candidates in a voting scheme and arcs indicating preference of one candidate over another, the \emph{Banks winner} of $T[V\setminus F]$ is the candidate collectively preferred to every other candidate in~$V\setminus F$.
Banks winners play an important role in social choice theory.
Minimal \FVSs are also related to so-called ``stable sets'' in tournaments \cite{Brandt2011}, a notion which is inspired by stable sets in game theory \cite{vonNeumannM44}. They could play a major role in proving a conjecture of Brandt~\cite{Brandt2011}.

\paragraph{Extremal Combinatorics.}
We denote the number of minimal \FVSs in a tournament $T$ by~$f(T)$, and the maximum $f(T)$ over all $n$-vertex tournaments by~$M(n)$.
The letter ``M'' was chosen in honor of Moon who in 1971 proved~\cite{Moon1971} that
\begin{equation*}
  1.4757^n \leq  M(n) \leq 1.7170^n %\enspace .
\end{equation*}
for large $n$.
Our combinatorial main result are the stronger bounds
\begin{equation*}
  1.5448^n \leq  M(n) \leq \fvstub^n \enspace .
\end{equation*}
%Both our lower and upper bounds mark the first improvement on these numbers since a result by Moon~\cite{Moon1971} from 1971.
To prove our new lower bound on  $M(n)$, we construct an infinite family of tournaments all having $21^{n/7} > 1.5448^n$ minimal \FVSs.
%Thus, $1.5548^n$ is a lower bound on $f(n)$, which improves the previously best lower bound of $1.4757^n$~\cite{Moon1971}.
To prove our new upper bound on  $M(n)$, we bound the maximum of a convex function bounding $M(n)$ from above, and otherwise rely on case distinctions and recurrence relations.

For general directed graphs, no non-trivial upper bound on the number of minimal \FVSs is known. For undirected graphs, Fomin et al.~\cite{FominEtAl2008} show that any undirected graph on~$n$ vertices contains at most $1.8638^n$ minimal \FVSs, and that infinitely many graphs have $105^{n/10} > 1.5926^n$ minimal \FVSs.
Lower bounds of roughly $\log n$ on the size of a maximum-size acyclic subtournament have been obtained by Reid and Parker \cite{ReidParker1970} and Neumann-Lara \cite{NeumannLara94}.
Other bounds on minimal or maximal sets with respect to vertex-inclusion have been obtained for dominating sets~\cite{FominGPS08}, bicliques~\cite{GaspersKL08}, separators~\cite{FominV08}, potential maximal cliques~\cite{FominV10}, bipartite graphs~\cite{ByskovMS05}, $r$-regular subgraphs~\cite{GuptaRS06}, and, of course, independent sets~\cite{MillerM60,MoonM65}.
The increased interest in exponential time algorithms over the last few years has given new importance to such bounds, as the enumeration of the corresponding objects may be used in exponential time algorithms to solve various problems; see, for example \cite{BjorklundHK09,Byskov04,Eppstein03,FominV10,Lawler76,RamanSS07}.

\paragraph{Enumeration.}
An algorithm by Schwikowski and Speckenmeyer~\cite{SchwikowskiEtAl2002} lists the minimal \FVSs of a directed graph $G$ with polynomial delay, by traversing a hypergraph whose vertices are bijectively mapped to minimal \FVSs of $G$.
Unfortunately the Schwikowski-Speckenmeyer-algorithm may use exponential space, and it is not known whether the minimal FVS problem allows a polynomial delay enumeration algorithm with polynomially bounded space complexity in directed graphs.
Our algorithmic main result provides such an enumeration algorithm for the family of \emph{tournaments}.
Our algorithm is inspired from that by Tsukiyama et al. for the (conceptually simpler) enumeration of maximal independent sets~\cite{TsukiyamaEtAl1977}.
It is based on iterative compression, a technique for parameterized \cite{ReedSV04} and exact algorithms \cite{FominGKLS08}.
We thereby positively answer Fomin et al.'s~\cite{FominGKLS08} question regarding if the technique could be applied to other algorithmic areas.

\paragraph{Exact Algorithms.}
In the third \cite{Woeginger2008} in a series \cite{Woeginger03,Woeginger04,Woeginger2008} of very influential surveys on exact exponential time algorithms, Woeginger observes that Moon's upper bound on $M(n)$ provides an upper bound on the overall running time of the enumeration algorithm of Schwikowski and Speckenmeyer.
He explicitly asks for a faster algorithm for finding a feedback vertex set of minimum size in a tournament.
Our new bound yields a time complexity of $O(1.6740^n)$. Unlike upper bound proofs on other \cite{ByskovMS05,FominEtAl2008,FominGPS08,FominV08,FominV10,GaspersKL08,GuptaRS06,MillerM60,MoonM65} minimal or maximal sets with respect to vertex inclusion, for minimal \FVSs in tournaments no known (non trivial) proof readily translates into a polynomial-space branching algorithm.
Due to its space complexity, which differs from its time complexity by only a polynomial factor, the Schwikowski-Speckenmeyer-algorithm has only limited practicability \cite{Woeginger2008}.
With our new enumeration algorithm, we achieve however a polynomial-space $O(1.6740^n)$-time algorithm to find a minimum sized feedback vertex set in tournaments, and even to enumerate all minimal ones.
% Any polynomial-delay enumeration algorithm for minimal \FVSs yields an exact algorithm for finding a feedback vertex set of minimum size.
% Any combinatorial upper bound on $M(n)$ provides a corresponding bound on the time complexity of this exact algorithm; our bound yields a time complexity of $O(1.6740^n)$.
%This answers a question of Woeginger~\cite{Woeginger2008} who asked whether it is possible to beat the time complexity resulting from Moon's upper bound.
Dom et al. \cite{DomEtAl2006} independently answered Woeginger's question by constructing an iterative--compression algorithm solving only the optimization version of the problem.
However, the running time of their algorithm grows at least with $1.708^n$ and hence their result is inherently weaker than ours.

%For bipartite tournaments, we show that $2^{n/2}\approx 1.4142^n$ is an upper bound on the number of minimal \FVS in any bipartite tournament with $n$ vertices.
%We show this bound to be sharp, by constructing an infinite family of bipartite tournaments with $2^{n/2}$ minimal \FVSs.

\paragraph{Organization of the paper.}
%In Section~\ref{sec:minimum_number} we give tight bounds on the minimum number of minimal \FVSs.
Preliminaries are provided in Section~\ref{sec:preliminaries}.
In Section~\ref{sec:minimum_number}, we answer how many distinct minimal \FVSs a (strong) tournament on $n$ vertices has \emph{at least}.
Section~\ref{sec:lowerbound} proves the lower bound on $M(n)$, and Section \ref{sec:upperbound} gives the upper bound.
We conclude with the polynomial-space po\-ly\-no\-mial-delay enumeration algorithm in Section~\ref{sec:polydelaypolyspace}.
The main result of the paper is formulated in Corollary~\ref{thm:minfvspolyspace}.
%Due to limited space, many proofs have been moved to the appendix.

\section{Preliminaries}
\label{sec:preliminaries}
Let $T = (V,A)$ be a tournament.
%and a bipartite tournament $B$ by $B=(V_1\uplus V_2,A)$ where $\{V_1,V_2\}$ is the bipartition of the vertices of $B$ into independent sets.
%For a tournament $T$ and
For a vertex subset $V'\subseteq V$, the tournament $T[V']$ induced by $V'$ is called a \emph{subtournament} of~$T$.
%A tournament $T=(V,A)$ is a directed graph with exactly one arc between any two vertices.
%A subset $V'\subseteq V$ of vertices \emph{induces} the tournament $T[V']$, which is called a \emph{subtournament} for $T$.
For each vertex $v\in V$, its \emph{in-neighborhood} and \emph{out-neighborhood} are defined as $N^-(v)=\{u\in V~|~(u,v)\in A\}$ and $N^+(v)=\{u\in V~|~(v,u)\in A\}$, respectively.
If there is an arc $(u,v)\in A$ then we say that $u$ \emph{dominates} $v$ and write $u \rightarrow v$.
%A tournament $T$ is \emph{strong} if there exists a directed path between any two vertices.
A tournament is \emph{strongly connected}, or simply \emph{strong}, if there exists a directed path between any two distinct vertices.
A non-strong tournament $T$ has a unique factorization $T = S_1 + \hdots + S_r$ (or Zykov sum) into strong subtournaments $S_1,\hdots,S_r$, where every vertex $u\in V(S_k)$ dominates all vertices $v\in V(S_\ell)$, for $1\leq k < \ell\leq r$.
For $n\in\mathbb N$  let $\mathcal T_n$ denote the set of tournaments with $n$ vertices and let $\mathcal T^*_n$ denote the set of strong tournaments on $n$ vertices.

The \emph{score} of a vertex $v\in V$ is the size of its out-neighborhood, and denoted by $s_v(T)$ or $s_v$ for short.
Consider a labeling $1,\hdots,n$ of the vertices of $T$ such that their scores are non-decreasing, and associate with $T$ the \emph{score sequence} $s(T)=(s_1,\hdots,s_n)$.
If $T$ is strong then $s(T)$ satisfies \emph{Landau's inequalities}~\cite{HararyMoser1966,Landau1953}:
\begin{align}
\sum_{v=1}^k s_v &\geq \binom{k}{2}+1~\mbox{ for all }~k=1,\hdots,n-1, \text{ and}
\label{eqn:sbound2}\displaybreak[0]\\
\sum_{v=1}^n s_v &= \binom{n}{2}
\label{eqn:sbound3} \enspace .
\end{align}
%\begin{align}
%\sum_{v=1}^k s_v &\geq \binom{k}{2}+1~\mbox{ for all }~k=1,\hdots,n-1, \text{ and}
%\label{eqn:sbound2}\\
%\sum_{v=1}^n s_v &= \binom{n}{2}
%\label{eqn:sbound3}
%\end{align}
For every non-decreasing sequence $s$ of positive integers satisfying conditions \eqref{eqn:sbound2}--\eqref{eqn:sbound3}, there exists a tournament whose score sequence is $s$~\cite{Landau1953}.

%For $i=1,\hdots,n$, the set $\{1,\hdots,i\} \subseteq V$ of vertices is denoted $V_i$.
%Let $L$ be a set of non-zero elements from the ring $\mathbb Z_n$ of integers modulo $n$, such that $n=2|L|+1$ for all $i\in L$, $-i\notin L$.
Let $L$ be a set of non-zero elements from the ring $\mathbb Z_n$ of integers modulo $n$ such that for all $i\in \mathbb Z_n$ exactly one of $+i$ and $-i$ belongs to $L$.
The tournament $T_L = (V_L,A_L)$ with $V_L = \{1,\hdots,n\}$ and $A_L = \{(i,j)\in V_L\times V_L ~|~(j-i)\bmod n~\in L\}$ is \emph{the circular n-tournament induced by} $L$.
A \emph{triangle} is a tournament of order~$3$. The cyclic triangle is denoted $C_3$.

A \emph{feedback vertex set (FVS)} of a tournament $T=(V,A)$ is a subset $F$ of vertices such that $T[V\setminus F]$ has no directed cycle.
It is \emph{minimal} if it does not contain a \FVS of $T$ as a proper subset.
Let $\mathcal F(T)$ be the collection of minimal \FVSs of $T$; its cardinality is denoted by $f(T)$.
A \emph{minimum \FVS} is a \FVS with the least possible number of vertices.

Acyclic tournaments are sometimes called \emph{transitive}; the (up to isomorphism unique) transitive tournament on $n$ vertices is denoted $TT_n$.
Let $\tau$ be the unique topological order of the vertices of $TT_n$ such that $\tau(u) < \tau(v)$ if and only if $u$ dominates $v$.
For such an order $\tau$ and integer $i\in\{1,\hdots,n\}$ the subsequence of the first $i$ values of $\tau$ is denoted $\tau_i(V(TT_n))=(\tau^{-1}(1),\hdots,\tau^{-1}(i))$; call $\tau_1(V(TT_n))$ the \emph{source} of $TT_n$.
For a minimal \FVS $F$ of a tournament $T$ the subtournament $T[V\setminus F]$ is a \emph{maximal transitive subtournament} of $T$ and $V \setminus F$ is a \emph{maximal transitive vertex set}.

\section{Minimum Number of Minimal FVSs}
\label{sec:minimum_number}
In this section we analyze the minimum number of minimal \FVSs in tournaments.

Let the function $m:\mathbb N\rightarrow\mathbb N, n\mapsto \min_{T\in\mathcal T_n}f(T)$ count the minimum number of minimal \FVSs over all tournaments of order $n$.
Since a minimal \FVS always exists, $m(n)\geq 1$ for all positive integers $n$.
This bound is attained, for every $n \in \mathbb{N}$, by the transitive tournaments $TT_n$.

%\begin{observation}[Moon~\cite{Moon1971}]
\begin{observation}[\cite{Moon1971}]
\label{thm:fvsstrongdecomposition}
  If $T = S_1 + \hdots + S_r$ is the factorization of a tour\-na\-ment $T$ into strong subtournaments $S_1,\hdots,S_r$, then $f(T)=f(S_1)\cdot\hdots\cdot f(S_r)$.
\end{observation}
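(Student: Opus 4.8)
The plan is to show that minimal feedback vertex sets of $T$ decompose along the strong components, and conversely that any combination of minimal FVSs of the components yields a minimal FVS of $T$; the product formula $f(T) = f(S_1)\cdots f(S_r)$ then follows immediately by counting. The key structural fact I would use is that in the factorization $T = S_1 + \cdots + S_r$, every vertex of $S_k$ beats every vertex of $S_\ell$ for $k < \ell$, so every directed cycle of $T$ lies entirely inside a single strong component $S_i$. Indeed, if a cycle visited two distinct components it would have to contain an arc going ``backwards'' from some $S_\ell$ to some $S_k$ with $k < \ell$, which does not exist; hence the cycles of $T$ are exactly the (disjoint) union of the cycles of $S_1, \ldots, S_r$.

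From this I would argue both inclusions. First, if $F$ is a minimal FVS of $T$, then for each $i$ the set $F_i := F \cap V(S_i)$ must be an FVS of $S_i$ (it kills all cycles inside $S_i$), and it must be a \emph{minimal} one: if some $F_i' \subsetneq F_i$ were still an FVS of $S_i$, then replacing $F_i$ by $F_i'$ inside $F$ would still destroy every cycle of $T$ (cycles in other components are untouched, cycles in $S_i$ are killed by $F_i'$), contradicting minimality of $F$. So $F$ restricts to a tuple $(F_1, \ldots, F_r) \in \mathcal{F}(S_1) \times \cdots \times \mathcal{F}(S_r)$, and since the $V(S_i)$ partition $V$, the map $F \mapsto (F_1,\ldots,F_r)$ is injective. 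Conversely, given any tuple $(F_1,\ldots,F_r)$ with $F_i \in \mathcal{F}(S_i)$, the union $F := F_1 \cup \cdots \cup F_r$ hits every cycle of $T$ (since each cycle lies in some $S_i$ and is hit by $F_i$), so $F$ is an FVS; and it is minimal because removing any vertex $v$, say $v \in V(S_i)$, leaves $F_i \setminus \{v\}$, which by minimality of $F_i$ fails to be an FVS of $S_i$, so some cycle of $S_i$ (hence of $T$) survives. Thus the map is also surjective onto $\mathcal{F}(S_1) \times \cdots \times \mathcal{F}(S_r)$, and therefore a bijection $\mathcal{F}(T) \cong \mathcal{F}(S_1) \times \cdots \times \mathcal{F}(S_r)$, which gives $f(T) = \prod_{i=1}^r f(S_i)$.

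I do not anticipate a serious obstacle here; the only point requiring a little care is the verification that restriction preserves \emph{minimality} (not just the FVS property) in both directions, i.e.\ making precise the ``local replacement'' argument: changing the part of $F$ inside one component does not affect whether cycles in the other components are covered, precisely because no cycle straddles two components. Once that observation is isolated and stated cleanly, both the injectivity and the surjectivity of the correspondence, as well as the minimality checks, are routine. It may be cleanest to first record the lemma ``every directed cycle of $T$ is contained in a single $S_i$'' as a standalone claim and then phrase the rest as a bijection argument.
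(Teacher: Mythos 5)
Your proof is correct. The paper itself gives no proof of this statement---it is recorded as an Observation cited from Moon (1971)---and your argument (every directed cycle of $T$ lies in a single strong factor because all inter-factor arcs point forward, hence $F \mapsto (F\cap V(S_1),\hdots,F\cap V(S_r))$ is a bijection between $\mathcal F(T)$ and $\mathcal F(S_1)\times\hdots\times\mathcal F(S_r)$) is exactly the standard justification, with the minimality checks in both directions handled correctly.
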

Hence from now on we consider only strong tournaments (on at least $3$ vertices) and define
$m^*:\mathbb N\setminus\{1,2\}\rightarrow\mathbb N, n\mapsto \min_{T\in\mathcal T^*_n}f(T)$.
\begin{lemma}
\label{thm:constmstar}
The function $m^*$ is constant: $m^*(n) = 3$ for all $n\geq 3$.
\end{lemma}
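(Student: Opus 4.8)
The plan is to prove the two inequalities $m^*(n) \geq 3$ and $m^*(n) \leq 3$ separately. The lower bound $m^*(n) \geq 3$ should follow from the observation that a strong tournament on at least three vertices is not transitive, hence it contains at least one cyclic triangle, and more is true: I would argue that a strong tournament cannot have a unique minimal \FVS, nor exactly two. The key structural fact to invoke is that every vertex of a strong tournament lies on a cyclic triangle (indeed, if $v$ has both in- and out-neighbours, which every vertex in a strong tournament on $\geq 3$ vertices does, and if no arc goes from $N^+(v)$ to $N^-(v)$ then $\{v\}\cup$ one side would be separated, contradicting strongness); this forces enough ``spread'' among the maximal transitive vertex sets that at least three distinct minimal \FVSs must exist. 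Concretely, I expect to take a vertex $v$ of minimum score and exhibit three maximal transitive subtournaments with distinct complements; alternatively, one can observe that the sources of maximal transitive subtournaments (Banks winners) of a strong tournament number at least three, a known fact I would reprove directly.

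For the upper bound $m^*(n) \leq 3$, the natural approach is an explicit construction: build, for every $n \geq 3$, a strong tournament $T_n$ with exactly three minimal \FVSs. The cyclic triangle $C_3$ has exactly three minimal \FVSs (each single vertex), so it realizes $m^*(3)=3$. For larger $n$ I would try to ``grow'' $C_3$ by repeatedly adding a vertex that is forced into (or out of) every minimal \FVS in a controlled way. A clean candidate: take $C_3$ on $\{1,2,3\}$ and add vertices $4,\dots,n$ that all beat $\{1,2,3\}$ and form a transitive tournament among themselves dominating it — but this is not strong, so instead one adds a single ``absorbing'' vertex or uses a near-transitive strong tournament. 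The most robust choice is probably the strong tournament obtained from $TT_{n-1}$ by reversing the arc from the source to the sink, or more simply a tournament consisting of a long transitive ``spine'' $v_4 \to v_5 \to \cdots \to v_n$ together with a cyclic triangle $\{v_1,v_2,v_3\}$ attached so that $v_1$ is beaten by everything, $v_3$ beats everything, making the tournament strong while every maximal transitive vertex set is forced to contain all of the spine and exactly two of $\{v_1,v_2,v_3\}$; this yields exactly three minimal \FVSs, each equal to a single vertex of the triangle. I would verify strongness and that no other minimal \FVS exists by checking that every directed cycle passes through the triangle.

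The main obstacle is the upper-bound construction: one must simultaneously ensure the tournament is \emph{strong} (ruling out the easy non-strong examples) and that adding vertices does not create \emph{new} minimal \FVSs beyond the three coming from the core cyclic triangle. Keeping the count at exactly three while forcing strongness requires the added vertices to be ``rigid'' — each either dominated by or dominating all previously relevant vertices — and checking this rigidity against all directed cycles is the delicate part. I would handle it by an induction on $n$: assuming $T_{n-1}$ is strong with exactly three minimal \FVSs all contained in a fixed triangle, add a new vertex $v_n$ that beats exactly the source of the transitive part and is beaten by everything else (or the symmetric choice), then argue that $v_n$ lies on no new cyclic triangle disjoint from the core, so $\mathcal F(T_n) = \mathcal F(T_{n-1})$, while strongness is preserved because $v_n$ can reach and be reached from the core triangle.
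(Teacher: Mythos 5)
Your overall plan (a cyclic triangle for the lower bound, an explicit family for the upper bound) is the same as the paper's, but both halves are left with gaps, and the second one is serious. For the lower bound, the appeal to ``enough spread'' and to the unproved fact that a strong tournament has at least three Banks winners is not yet an argument. The clean finish, which is what the paper does, is: take any cyclic triangle on $v_1,v_2,v_3$ and extend each of its three $2$-element subsets to a maximal transitive vertex set $W_i'$; since $W_i'$ cannot contain the third triangle vertex while the other two extensions do contain it, the three complements are pairwise distinct minimal \FVSs. This is easily repaired, but as written the step from ``a cyclic triangle exists'' to ``three distinct minimal \FVSs exist'' is missing.

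The upper-bound construction is where the proposal would fail. First, the ``spine plus triangle'' tournament is underspecified: the arcs between $v_2$ and the spine are never fixed, and if they are mixed (some in, some out) then $v_2$ lies on cyclic triangles inside the spine and the count exceeds three. More importantly, the structural claim you intend to verify --- that every maximal transitive vertex set contains the whole spine and two triangle vertices, so that the three minimal \FVSs are the singletons $\{v_1\},\{v_2\},\{v_3\}$ --- is impossible for \emph{any} strong tournament on $n\geq 4$ vertices. Indeed, if $\{v_1\},\{v_2\},\{v_3\}$ were all \FVSs, every directed cycle would contain all three vertices, so $\{v_1,v_2,v_3\}$ would be the unique cyclic triangle; but $V\setminus\{v_1\}$ is then transitive with order $u_1\rightarrow\cdots\rightarrow u_{n-1}$, strongness forces $v_1\rightarrow u_1$ and $u_{n-1}\rightarrow v_1$ (so $\{u_1,u_{n-1}\}=\{v_2,v_3\}$), and $u_2$ (a fourth vertex) then forms a second cyclic triangle with $v_1$ and one of $u_1,u_{n-1}$, a contradiction. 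Concretely, in your construction the pair of triangle vertices that makes a cycle with every spine vertex (e.g.\ $\{v_1,v_2\}$ when $v_2$ dominates the spine) is itself a maximal transitive set, so one of the three minimal \FVSs necessarily contains the entire spine; this also invalidates the inductive invariant ``three minimal \FVSs all contained in a fixed triangle.'' The fix is easy and close to what you already wrote in passing: take $TT_n$ with the source-to-sink arc reversed (every cycle must use the reversed arc, and the minimal \FVSs are exactly $\{1\}$, $\{n\}$ and $\{2,\hdots,n-1\}$), or the paper's $U_n$, which attaches a pair $u_1,u_2$ to $TT_{n-2}$ and has minimal \FVSs $\{u_1,u_2\}$, $\{1\}$ and $\{2,\hdots,n-2\}$. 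In either case one of the three minimal \FVSs has $n-2$ vertices, and the verification is a short direct check rather than an induction.
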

\begin{proof}
Let $T\in\mathcal T^*_n$ be a strong tournament. We show that $f(T)\geq 3$.
As $T$ is strong, it contains some cycle and thus some cyclic triangle \cite{Moon66}, with vertices $v_1,v_2,v_3$.
For $i=1,2,3$, define the vertex sets $W_i=\{v_i,v_{(i+1)\mod 3}\}$.
Every set $W_i$ can be extended to a maximal transitive vertex set $W_i'$ of $T$.
Note that for $i=1,2,3$ and $j\in\{1,2,3\}\setminus\{i\}$, we have $v_{(i+2)\mod 3}\in W_j'\setminus W_i'$.
Hence, there are three maximal transitive subtournaments of $T$ whose complements form three minimal \FVSs of $T$.
Consequently, $m^*(n)\geq 3$ for all $n\geq 3$.

To complete the proof, construct a family $\{U_n\in\mathcal T^*_n~|~n\geq 3\}$ of strong tournaments with exactly three minimal \FVSs.
Set $U_3$ equal to the cyclic triangle.
For $n\geq 4$, build the tournament $U_n$ as follows:
start with the transitive tournament $TT_{n-2}$, whose vertices are labeled $1,\hdots,n-2$ by decreasing scores.
Then add two special vertices $u_1,u_2$ which are connected by an arbitrarily oriented arc.
For $i\in\{1,2\}$, add arcs from all vertices $2,\hdots,n-2$ to $u_i$.
Finally, connect vertex $1$ to $u_i$ by an arc $(u_i,1)$, for $i=1,2$.
The resulting tournament $U_n$, depicted in Fig.~\ref{fig:fewminimalfvs}, has exactly three minimal \FVSs, namely $\{u_1,u_2\},\{1\}$ and $\{2,\hdots,n-2\}$.
\end{proof}

% \begin{figure}
%   \centering
%   \begin{minipage}[b]{0.45\textwidth}
%     \centering
%     \includegraphics[width=0.9\textwidth]{ufamily2.pdf}
%     \caption{Family $U_n$ of strong tournaments with only three minimal \FVSs.}
%     \label{fig:fewminimalfvs}
%   \end{minipage}
%   \begin{minipage}{0.05\textwidth}
%    $~$
%   \end{minipage}
%   \begin{minipage}[b]{0.45\textwidth}
%     \centering
%     \includegraphics[scale=1.4]{moonlower.pdf}\\{~}
%     \caption{A tournament $pq(T')\in\mathcal T^*_n$ with $f(pq(T'))=2f(T')+1$.}
%     \label{fig:lowerbeta}
%   \end{minipage}
% \end{figure}

%
%%\begin{lemma}
%%\label{thm:extrorderminfvs}
%%Every strong tournament $T\in\mathcal T^*_n$ has an acyclic subtournament with $\lceil\log_2n\rceil$ vertices; and this bound is sharp.
%%\end{lemma}
%
%% SG: Beweis falsch fuer k=2
%
%%\begin{proof}
%%Let $T=(V,E)\in\mathcal T^*_n$, and suppose $n=2^k$ for some $k\geq 0$.
%%We claim that $T$ contains a $TT_{k+1}$ as subtournament; this will prove the Lemma.
%%The claim clearly holds for $k\in\{0,1,2\}$, for $k\geq 3$ it suffices to show that some vertex $v\in V$ has score $s_v\geq 2^{k-1}$.
%%But this follows easily from \eqref{eqn:sbound3}.
%%
%%The cyclic triangle, $QT_5,RT_5,ST_5,ST_6$ and $ST_7$ have minimum \FVSs of order exactly $n-\lceil\log_2n\rceil$.
%%\qed \end{proof}

\section{Lower Bound on the Maximum Number of Minimal FVSs}
\label{sec:lowerbound}
We prove a lower bound of $21^{n/7} > 1.5448^n$ on the maximum number of minimal \FVSs of tournaments with $n$ vertices.

Formally, we will bound from below the values of the function $M(n)$ mapping integers $n$ to $\max_{T\in\mathcal T_n}f(T)$.
By convention, set $M(0)=1$.
Note that $M$ is monotonically non-decreasing on its domain:
given any tournament $T\in\mathcal T_n$ and any vertex $v\in V(T)$, for every minimal \FVS $F\in\mathcal F(T[V(T)\setminus\{v\}])$ either $F\in\mathcal F(T)$ or $F\cup\{v\}\in\mathcal F(T)$.
As $T$ and $v$ are arbitrary it follows that $M(n)\geq M(n-1)$.
% It is however possible that for some $n'\in\mathbb N$, $M(n')^{1/n'}>M(n'+1)^{1/(n'+1)}$.
% Since minimum \FVSs are necessarily minimal and minimal \FVSs can be enumerated with polynomial delay, and finding a minimum \FVS is NP-hard, we expect $M(n)$ to be superpolynomial.

\tikzset{vertex/.style={minimum size=1mm,circle,fill=black,draw},
         decoration={markings,mark=at position .5 with {\arrow[black,thick]{stealth}}}};

\begin{figure}
 \centering
 \subfloat[Family $U_n$ of strong tournaments with only three minimal \FVSs.]{\label{fig:fewminimalfvs}
 \begin{tikzpicture}
  \node at (-1,-0.7) {$TT_{n-2}$};
  \draw[rounded corners] (-0.75,-0.6) rectangle (4.2,0.6);
  \node (1) at (0,0) [vertex,label=265:$1$] {};
  \node (2) at (1,0) [vertex,label=265:$2$] {};
  \node (3) at (2,0) {$\dots$};
  \node (4) at (3,0) [vertex,label=275:$n-2$] {};
  \node (a) at (1.5,1.5) [vertex,label=above:$u_1$] {};
  \node (b) at (1.5,-1.5) [vertex,label=below:$u_2$] {};
  \draw[postaction={decorate}] (1)--(2);
  \draw[postaction={decorate}] (2)--(3);
  \draw[postaction={decorate}] (3)--(4);
  \draw[postaction={decorate}] (4)--(a);
  \draw[postaction={decorate}] (a)--(1);
  \draw[postaction={decorate}] (3)--(a);
  \draw[postaction={decorate}] (2)--(a);
  \draw[postaction={decorate}] (4)--(b);
  \draw[postaction={decorate}] (b)--(1);
  \draw[postaction={decorate}] (3)--(b);
  \draw[postaction={decorate}] (2)--(b);
  \node at (0,-2.5) {}; % to shift the figure up
 \end{tikzpicture}	
 }
 \subfloat[Paley digraph $ST_7$.]{\label{fig:st7}
 \begin{tikzpicture}
  \foreach \x in {1,...,7} {
    \pgfmathparse{\x*360/7};
    \node (a) at (\pgfmathresult:2cm) [vertex,label=\pgfmathresult:\x] {};
    \foreach \y in {1,2,4} {
      \pgfmathparse{(\x+\y)*360/7};
      \draw[postaction={decorate}] (a) -- (\pgfmathresult:2cm);
    }
  }
 \end{tikzpicture}	
 }
 \subfloat[A tournament $pq(T')\in\mathcal T^*_n$ with $f(pq(T'))=2f(T')+1$.]{\label{fig:lowerbeta}
 \begin{tikzpicture}
  \node (p) at (0,0) [vertex,label=left:$p$] {};
  \node (q) at (3,0) [vertex,label=right:$q$] {};
  \node (T) at (1.5,-1.5) {$T'$};
  \draw[postaction={decorate}] (p)--(T);
  \draw[postaction={decorate}] (T)--(q);
  \draw[postaction={decorate}] (q)--(p);
  \node at (0,-3) {}; % to shift the figure up
 \end{tikzpicture}	
 }
\caption{Constructions of extremal tournaments.}
\end{figure}
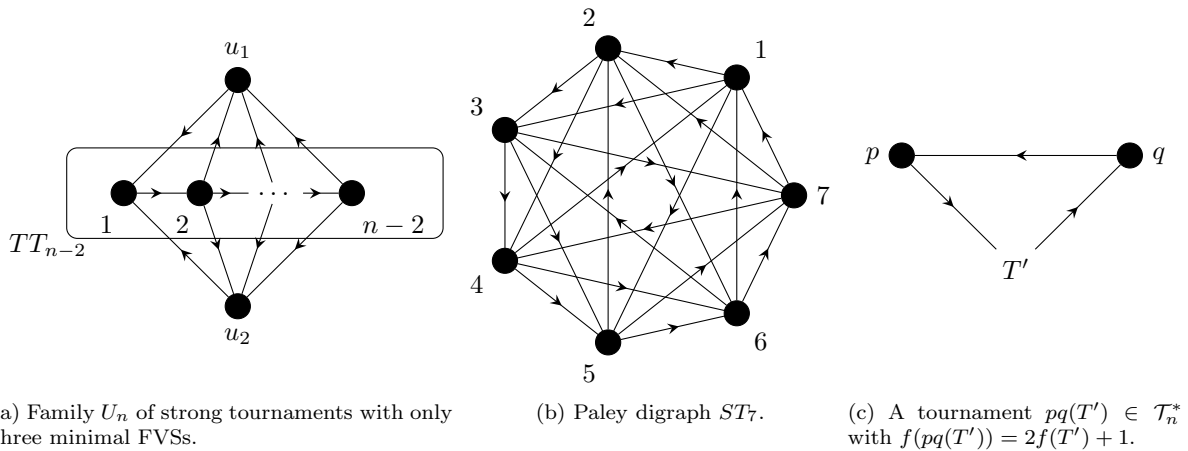

We will now show that there is an infinite family of tournaments on $n=7k$ vertices, for any $k\in\mathbb N$, with $21^{n/7} > 1.5448^n$ minimal \FVSs, improving upon Moon's \cite{Moon1971} bound of $1.4757^n$.
%Let us use the following observation.
%\begin{observation}[\cite{Moon1971}]
%\label{thm:fvsstrongdecomposition}
%  If $T = S_1 + \hdots + S_r$ is the factorization of a tour\-na\-ment~$T$ into strong subtournaments $S_1,\hdots,S_r$, then $f(T)=f(S_1)\cdot\hdots\cdot f(S_r)$.
%\end{observation}
Let $ST_7$ denote the Paley digraph of order 7, i.e. the circular $7$-tournament induced by the set $L = \{1,2,4\}$ of quadratic residues modulo 7 (see Fig. \ref{fig:st7}).
All maximal transitive subtournaments of $ST_7$ are transitive triangles, of which there are exactly 21, as each vertex is the source of 3 distinct transitive triangles.
Thus, all minimal \FVSs for $ST_7$ are minimum \FVSs.
We remark that $ST_7$ is the unique $7$-vertex tournament without any $TT_4$ as subtournament \cite{ReidParker1970}.
% Moreover, the seven vertex triples generating the edges of the Fano plane correspond to half of the cyclic triangles of $ST_7$.

\begin{lemma}
\label{thm:manyminimalfvs}
There exists an infinite family of tournaments with $21^{n/7}$ minimal \FVSs.
\end{lemma}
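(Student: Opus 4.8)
The plan is to build the family by composing $k$ copies of the Paley tournament $ST_7$ under the series composition ``$+$'' and then to invoke the multiplicativity recorded in Observation~\ref{thm:fvsstrongdecomposition}. Concretely, for every $k\in\mathbb N$ I would take $T_k = ST_7 + ST_7 + \dots + ST_7$ with exactly $k$ summands. This is a tournament on $n = 7k$ vertices, and since each copy of $ST_7$ is strong (it is vertex-transitive and contains a directed Hamiltonian cycle) and no vertex of one copy can lie on a cycle together with a vertex of a later copy, the displayed expression is precisely the factorization of $T_k$ into its strong components, with all $k$ factors isomorphic to $ST_7$.

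The key step is then a one-line computation. Applying Observation~\ref{thm:fvsstrongdecomposition} to $T_k = S_1 + \dots + S_k$ with $S_i \cong ST_7$ gives $f(T_k) = \prod_{i=1}^k f(S_i) = f(ST_7)^k$. It only remains to determine $f(ST_7)$, which was already spelled out just before the statement: $ST_7$ contains no $TT_4$, so every maximal transitive subtournament of $ST_7$ is a transitive triangle; there are exactly $21$ of these (each of the $7$ vertices is the source of exactly $3$ transitive triangles, so $7 \cdot 3 = 21$, matching $\binom{7}{3} - 14 = 21$ since a regular $7$-tournament has $14$ cyclic triangles), and the complements of these $21$ triangles are precisely the minimal \FVSs of $ST_7$. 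Hence $f(ST_7) = 21$ and $f(T_k) = 21^k = 21^{n/7}$, as required; this yields $M(n) \ge 21^{n/7} > 1.5448^n$ for all $n$ divisible by $7$, improving Moon's bound of $1.4757^n$.

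Honestly, there is no real obstacle here once Observation~\ref{thm:fvsstrongdecomposition} is in hand; the only point needing any care is the value $f(ST_7) = 21$, and even that reduces to the $TT_4$-freeness of $ST_7$. If one preferred a self-contained justification rather than citing \cite{ReidParker1970}, one could note that by vertex-transitivity it suffices to inspect the out-neighborhood $\{i{+}1, i{+}2, i{+}4\}$ of a single vertex $i$ and check it induces a cyclic triangle, so no vertex dominates a transitive $3$-set and therefore no $TT_4$ occurs; alternatively one can just enumerate the $35$ triples and observe that $21$ are transitive. The construction also immediately transfers to all $n$ (not only multiples of $7$) via the monotonicity of $M$ noted in Section~\ref{sec:lowerbound}, though for the stated lemma the family on $n = 7k$ vertices already suffices.
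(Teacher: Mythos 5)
Your proof is correct and follows essentially the same route as the paper: form $T_0 = ST_7 + \dots + ST_7$ from $k$ copies, apply the multiplicativity of $f$ over the strong factorization (Observation~\ref{thm:fvsstrongdecomposition}), and use $f(ST_7)=21$. The extra justification you give for $f(ST_7)=21$ matches the discussion the paper places immediately before the lemma, so there is nothing to add.
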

\begin{proof}
Let $k\in\mathbb N$ and form the tournament $T_0=ST_7+\hdots+ST_7$ from $k$ copies of $ST_7\in\mathcal T^*_7$.
Then $T_0\in\mathcal T_n$ for $n=7k$, and the number of minimal \FVSs in $T_0$ is
$f(T_0)=f(ST_7)^k=21^k=21^{n/7}$.
\end{proof}

\section{Upper Bound on the Maximum Number of Minimal FVSs}
\label{sec:upperbound}
We give an upper bound of $\beta^n$, where $\beta =\fvstub$, on the maximum number of minimal \FVSs in any tournament $T\in\mathcal T_n$, for any positive integer $n$.
This improves the bound of $1.7170^n$ by Moon \cite{Moon1971}.
Instead of minimal \FVSs we count maximal transitive subtournaments, and with respect to Observation~\ref{thm:fvsstrongdecomposition} we count the maximal transitive subtournaments of \emph{strong} tournaments.

\medskip

We start with three properties of maximal transitive subtournaments.
First, for a strong tournament $T = (V,A)$ with score sequence $s =(s_1,\hdots,s_n)$ the following holds:
if $TT_k = (V',A')$ is a maximal transitive subtournament of $T$ with $\tau_1(V') = (t)$ then $T[V'\setminus\{t\}]$ is a maximal transitive subtournament of $T[N^+(t)]$.
Hence $f(T)\leq\sum_{v=1}^nM(s_v)%\leq\sum_{v=1}^n\beta^{s_v}
$, where $s_v\leq n-2$ for all $v\in V$.
This allows us to effectively bound $f(T)$ via a recurrence relation.

Second, there cannot be too many vertices with large score.
\begin{lemma}
\label{thm:nummaxscore}
For $n\geq 8$ and $k\in\{0,1,2\}$, any strong tournament $T\in\mathcal T^*_n$ has at most $2(k+1)$ vertices
of score at least $n-2-k$.
\end{lemma}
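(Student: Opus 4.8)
The plan is to work directly with the score sequence and the Landau inequalities. Fix $n \geq 8$ and $k \in \{0,1,2\}$, and suppose for contradiction that $T \in \mathcal{T}^*_n$ has at least $2(k+1)+1 = 2k+3$ vertices of score at least $n-2-k$. I would label the vertices $1,\dots,n$ so that the scores $s_1 \leq \cdots \leq s_n$ are non-decreasing. Since every vertex has score at most $n-1$, and in a strong tournament no vertex has score $n-1$ (such a vertex would be a "dominator" and $T$ could not be strong — more carefully, score $n-1$ forces a source, contradicting strong connectivity), every score lies in $\{1,\dots,n-2\}$. So the $2k+3$ high-score vertices all have score in the window $\{n-2-k,\dots,n-2\}$, a set of only $k+1$ values.

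**The counting step.**

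The key is to sum scores two ways. On one hand, $\sum_{v=1}^n s_v = \binom{n}{2}$ by \eqref{eqn:sbound3}. On the other hand, I would lower-bound this sum: the $2k+3$ top vertices each contribute at least $n-2-k$, and the remaining $n-2k-3$ vertices each contribute at least $1$ (positivity of scores, which holds for strong tournaments), so
\begin{equation*}
\binom{n}{2} = \sum_{v=1}^n s_v \geq (2k+3)(n-2-k) + (n-2k-3)\cdot 1 \enspace .
\end{equation*}
However, this crude bound is almost certainly too weak on its own for small-to-moderate $n$; the real leverage must come from the other direction. The sharper idea is to look at the $n - (2k+3)$ lowest-score vertices and apply Landau's inequality \eqref{eqn:sbound2} with $k' = n-2k-3$: the sum of the smallest $n-2k-3$ scores is at least $\binom{n-2k-3}{2}+1$. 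Combining with the fact that the remaining $2k+3$ scores are each at most $n-2$, we get
\begin{equation*}
\binom{n}{2} = \sum_{v=1}^n s_v \leq (2k+3)(n-2) + \left[\binom{n}{2} - \binom{n-2k-3}{2} - 1\right]^{?}
\end{equation*}
— wait, that's the wrong direction. Let me instead do it cleanly: the sum of the top $2k+3$ scores equals $\binom{n}{2} - (\text{sum of bottom } n-2k-3 \text{ scores}) \leq \binom{n}{2} - \binom{n-2k-3}{2} - 1$. But each of those top scores is at most $n-2$, so this gives an upper bound, not the contradiction I want. The contradiction comes from below: each of the top $2k+3$ scores is at least $n-2-k$, so their sum is at least $(2k+3)(n-2-k)$, hence
\begin{equation*}
(2k+3)(n-2-k) \leq \binom{n}{2} - \binom{n-2k-3}{2} - 1 \enspace .
\end{equation*}
I would expand both sides and check that this inequality \emph{fails} for all $n \geq 8$ and $k \in \{0,1,2\}$, yielding the contradiction. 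Expanding: the right side is $\binom{n}{2} - \binom{n-2k-3}{2} - 1$, which is a sum of $2k+3$ consecutive integers from $n-2k-3$ up to $n-1$, namely $\sum_{j=n-2k-3}^{n-1} j = (2k+3)\cdot\frac{(n-2k-3)+(n-1)}{2} = (2k+3)(n-k-2)$. So the inequality becomes $(2k+3)(n-2-k) \leq (2k+3)(n-k-2) - 1$, i.e. $0 \leq -1$, a contradiction — but this is suspiciously clean, suggesting I've been sloppy and the actual statement needs the bound to be \emph{strict} somewhere, or needs $2(k+1)$ rather than $2(k+1)+1$ to be the breaking point, which matches the claimed constant exactly.

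**Where the real work lies.**

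The clean arithmetic above strongly suggests the bound $2(k+1)$ is tight \emph{up to the Landau inequality being an equality}, so the honest proof must track when equality can hold and rule it out — that is, with exactly $2(k+1)$ high-score vertices one gets equality forcing the bottom $n-2k-3$ vertices to form a specific near-transitive structure, and pushing to $2(k+1)+1$ overshoots by the margin of $1$ computed above. The main obstacle, therefore, is being careful about off-by-one effects: precisely which scores are "high", whether score $n-1$ is truly excluded (it is, for strong $T$), and handling the hypothesis $n \geq 8$ (for small $n$ the window $\{n-2-k,\dots,n-2\}$ may overlap badly with the low scores, or $n-2k-3$ may be too small for Landau to bite — note $n \geq 8$ and $k \leq 2$ give $n-2k-3 \geq 1$). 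I would also double-check the edge case $k=2$, $n=8$ by hand, since there $n-2k-3 = 1$ and the Landau inequality degenerates to $s_1 \geq 1$. If the slick argument via the telescoping sum of consecutive integers goes through with the strictness correctly placed, the lemma follows in a few lines; the delicate part is purely the bookkeeping of which inequality is strict and why the hypothesis $n\ge 8$ is exactly what is needed.
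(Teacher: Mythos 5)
Your argument is correct and is essentially the paper's own proof: both combine Landau's inequality $\sum_{v=1}^{n-(2k+3)} s_v \geq \binom{n-(2k+3)}{2}+1$ for the low-score vertices with the lower bound $(2k+3)(n-2-k)$ on the top $2k+3$ scores and the identity $\sum_{v=1}^n s_v = \binom{n}{2}$, and your final inequality $(2k+3)(n-2-k)\leq(2k+3)(n-k-2)-1$ is exactly the paper's contradiction $n^2-n\geq n^2-n+2$ after dividing by two. Your residual worry about where the ``strictness'' lives is unfounded: the $+1$ in Landau's inequality for strong tournaments already supplies the margin of one, so the argument is complete as written (including the edge case $n=8$, $k=2$, where it degenerates to $s_1\geq 1$).
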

\begin{proof}
Fix some strong tournament $T\in\mathcal T^*_n$ and $k\in\{0,1,2\}$.
Suppose for contradiction that $T$ contains $2k+3$ vertices with score at least $n-2-k$.
Then the Landau inequalities \eqref{eqn:sbound2} and \eqref{eqn:sbound3} imply the contradiction
\begin{eqnarray*}
2\binom{n}{2} & = & 2\left(\sum_{v=1}^{n-(2k+3)}s_v+\sum_{v=n-(2k+2)}^ns_v\right)\\
              & \geq & 2\left(\binom{n-(2k+3)}{2}+1+(2k+3)(n-2-k)\right) =  n^2-n+2.
\end{eqnarray*}
%Thus the lemma.
\end{proof}
For $n\leq 7$, we can explicitly list the strong $n$-vertex tournaments for which the Lemma fails:
the cyclic triangle for $k=0$, the tournaments $RT_5,ST_6$ for $k=1$ and $ST_7$ for $k=2$. $RT_5$ is the regular tournament of order 5 and $ST_6$ is the tournament obtained by arbitrarily removing some vertex from $ST_7$ (defined in the previous section) and all incident arcs.

Third, let $T'$ be a tournament obtained from a tournament $T$ by reversing all arcs of $T$.
Then, $f(T) = f(T')$, whereas the score $s_v(T)$ of each vertex $v$ turns into $s_v(T') = n - 1 - s_v(T)$.
This implies that analyzing score sequences with maximum score $s_n\geq n-1-c$ for some constant $c$ is symmetric to analyzing score sequences with minimum score $s_1\leq c$.

\medskip

We provide a complete proof of the upper bound on the maximum number of minimal feedback vertex set in tournaments.
Our proof that any tournament on $n$ vertices has at most $\beta^n$ maximal transitive subtournaments consists of several parts.
We start by proving the bound for tournaments with few vertices.
The inductive part of the proof first considers tournaments with large maximum score (and symmetrically small minimum score),
and then all other tournaments.

We begin the proof by considering tournaments with up to 10 vertices.
For $n\leq 4$ exact values for $M(n)$ were known before~\cite{Moon1971}.
For $n=5,\hdots,9$ we obtained exact values for $M(n)$ with the help of a computer.
For these values the extremal tournaments obey the following structure:
pick a strong tournament $T'\in\mathcal T^*_{n-2}$ and construct the strong tournament $pq(T') \in \mathcal T^*_n$ by attaching two vertices to $T'$ as in
Fig. \ref{fig:lowerbeta}; namely add vertices $p$ and $q$ to $T'$, and arcs $q \rightarrow p$, and $p \rightarrow t$, $t \rightarrow q$ for each vertex $t$ in $T'$.
%\begin{figure}
%  \centering
%  \includegraphics{moonlower.pdf}
%  \caption{A tournament $pq(T')\in\mathcal T^*_n$ with $f(pq(T'))=2f(T')+1$.}
%  \label{fig:lowerbeta}
%\end{figure}
Then $f(pq(T'))=2f(T')+1$, as observed by Moon~\cite{Moon1971}.

For $n = 5$, there are exactly two non-isomorphic strong tournaments $QT_5\cong pq(C_3),\linebreak RT_5\in\mathcal T^*_5$. For these, $f(QT_5)=f(RT_5)=M(5)=2\cdot 3+1=7$.
%The tournament $QT_5$ is constructed by attaching two vertices to the cyclic triangle as in Fig. \ref{fig:lowerbeta}.
%
%Both $QT_5$ and $RT_5$ are free of subtournaments isomorphic to $TT_4$ \cite{SanchezFlores1998}.
%Every $5$-vertex induced subtournament $ST_5$ of $ST_7$ is also free of subtournaments isomorphic to $TT_4$, but $f(ST_5)=6$.
%There is exactly one more tournament $ST_5\in\mathcal T_5$ without $TT_4$ as induced subtournament:
%$ST_5$ is isomorphic to any $5$-vertex induced subtournament of $ST_7$, however $f(ST_5)=6$.
%
For $n = 6$, $ST_6$ is the unique tournament from $\mathcal T_6$ with $f(ST_6)=M(6)=12$ minimal \FVSs.
%Then $ST_6$ is the unique tournament from $\mathcal T_6$ without $TT_4$ as subtournament \cite{ReidParker1970,SanchezFlores1994}, and
%moreover it is the unique tournament from $\mathcal T_6$ with $f(ST_6)=M(6)=12$.
For $n = 7$ the previous section showed $f(ST_7)=21$, and in fact $ST_7$ is the unique $7$-vertex tournament with $M(7)=21$ minimal \FVSs.
For $n \in\{8,9\}$, $ST_n \cong pq(ST_{n-2})$; then $f(ST_n)=M(n)$.
Table~\ref{tab:smallextremal} summarizes that for $n\leq 9$, $M(n)\leq\beta^n$.
\begin{table}
\centering
\caption{Extremal tournaments of up to 9 vertices}
\begin{tabular}{rrrl}
\hline
\noalign{\smallskip}
 $n\;$ & $M(n)\;$      & $\quad M(n)^{1/n} \approx\quad$ & $T\in\mathcal T_n:f(T)=M(n)$\\
\noalign{\smallskip}
\hline
\noalign{\smallskip}
 1\;   & $       1\;$ & $     1.00000\quad$             & $T\in\mathcal T_1$\\
 2\;   & $       1\;$ & $     1.00000\quad$             & $T\in\mathcal T_2$\\
 3\;   & $       3\;$ & $     1.44225\quad$             & $T\in\mathcal T_3\setminus\{TT_3\}$\\
 4\;   & $       3\;$ & $     1.31607\quad$             & $T\in\mathcal T_4\setminus\{TT_4\}$\\
 5\;   & $       7\;$ & $     1.47577\quad$             & $QT_5\cong pq(C_3),RT_5$\\
 6\;   & $      12\;$ & $     1.51309\quad$             & $ST_6\cong ST_7-\{1\}$\\
 7\;   & $      21\;$ & $     1.54486\quad$             & $ST_7$\\
 8\;   & $      25\;$ & $     1.49535\quad$             & $ST_8\cong pq(ST_6)$\\
 9\;   & $      43\;$ & $     1.51879\quad$             & $ST_9\cong pq(ST_7)$\\
\hline
\end{tabular}
\label{tab:smallextremal}
\end{table}

Next, we bound $M(10)$ by means of $M(n)$ for $n\leq 9$.
Let $W$ be a maximal transitive vertex set of $T\in \mathcal T^*_{10}$.
Then either $v^* \in W$ or $v^* \notin W$, where $v^*$ is a vertex with score $s_{10}$.
There are at most $M(s_{10}) \le M(9)$ maximal transitive vertex sets $W$ such that $v^* \in W$ and at most $M(9)$ such sets $W$ for which $v^* \notin W$.
As $(2M(9))^{1/10}=86^{1/10}< 1.5612$, the proof follows for all tournaments with at most 10 vertices.

For the rest of this section we consider tournaments with $n\geq 11$ vertices.
Let $T = (V,A)$ be a strong tournament on $n\geq 11$ vertices and let $s = (s_1,\hdots, s_n)$ be the score sequence of $T$.
We will show that $f(T)\leq \beta^n$.
The proof considers four main cases and several subcases with respect to the minimum and maximum score of the tournament.
To avoid a cumbersome nesting of cases, whenever inside a given case we assume that none of the earlier cases applies.
By $W$ we denote a maximal transitive vertex set of $T$.\\

\noindent
\textbf{Case 1: $\mathbf{s_n = n-2}$.}
Let $b$ be the unique vertex dominating vertex $n$.\\
If $b \notin W$ then $\tau_1(W)=(n)$; there are at most $M(s_n)=M(n-2)$ such~$W$.\\
If $b\in W$ and $n\in W$, then $\tau_1(W \setminus \{b\})=(n)$ as no vertex except $b$ dominates $n$. So, $\tau_2(W)=(b,n)$ and there are at most $M(s_b-1)$ such $W$.
For the last possibility, where $b\in W$ and $n\notin W$,
note that $W$ contains at least one in-neighbor of $b$, otherwise $W$ were not maximal as $n$ could be added.
We consider 4 subcases depending on the score of $b$.
\begin{description}
% 1.4656
\item[Case 1.1: $\mathbf{s_b = n-2}$.]
     Let $c$ be the unique vertex dominating $b$.
     As at most 2 vertices have score $n-2$ by Lemma~\ref{thm:nummaxscore}, $s_c \le n-3$.
     We have that $c \in W$, otherwise $W$ would not be maximal as $W\cup \{n\}$ induces a transitive subtournament of $T$.
     As $b$ and its unique in-neighbor $c$ are in $W$, $\tau_2(W)=(c,b)$.
     There are at most $M(s_c-1) \le M(n-4)$ such $W$.
     In total, $f(T)\leq M(n-2) + M(n-3) + M(n-4)\leq \beta^{n-4} + \beta^{n-3} + \beta^{n-2}$ which is at most $\beta^n$ because $\beta \ge 1.4656$.
\end{description}
In the three remaining subcases, all in-neighbors of $b$ have score at most $n-3$: if $c_i\in N^-(b)$ had score $n-2$, then Case 1.1 would apply with $n:=c_i$ and $b:=n$.
\begin{description}
% 1.6181
\item[Case 1.2: $\mathbf{s_b = n-3}$.]
     Let $N^-(b):=\{c_1,c_2\}$ such that $c_1 \rightarrow c_2$.
     %Without loss of generality, $s_{c_1}\le n-3$; otherwise Case 1.1 applies with $n:=c_1$ and $b:=n$.
     Then either $\tau_1(W) = (c_1)$ or $\tau_1(W) = (c_2)$; there are at most $2M(n-3)$ such $W$.
     It follows $f(T)\leq M(n-2) + M(n-4) + 2M(n-3)\leq \beta^{n-4} + 2\beta^{n-3} + \beta^{n-2} \leq \beta^n$ as $\beta \ge 1.6181$.
% 1.6664
\item[Case 1.3: $\mathbf{s_b = n-4}$.]
     Let $N^-(b):=\{c_1,c_2,c_3\}$.
     Observe that at most $2$ vertices among $N^-(b)$ have score $n-3$, otherwise $T$ is not strong as $N^-(b)\cup \{b,n\}$ induce a strong component.
     Either $\tau_1(W) = (c_1)$ or $\tau_1(W) = (c_2)$ or $\tau_1(W)=(c_3)$; there are at most $2M(n-3)+M(n-4)$ such $W$.
     Thus, $f(T)\leq M(n-2) + M(n-5) + 2M(n-3) + M(n-4)\leq \beta^{n-5} + \beta^{n-4} + 2\beta^{n-3} + \beta^{n-2} \leq \beta^n$ as $\beta \ge 1.6664$.
% 1.6737
\item[Case 1.4: $\mathbf{s_b \leq n-5}$.]
     Then there are at most $M(n-1)$ subtournaments not containing $n$.
     It follows $f(T)\leq M(n-2) + M(n-6) + M(n-1)\leq \beta^{n-6} + \beta^{n-2} + \beta^{n-1} \leq \beta^n$ as $\beta \ge 1.6737$.
\end{description}

\noindent
\textbf{Case 2: $\mathbf{s_n = n-3}$.}
Let $b_1,b_2$ be the two vertices dominating $n$ such that $b_1\rightarrow b_2$.
The tree in Fig.~\ref{fig:searchtree} pictures our case distinction.
Its leaves correspond to six different cases, numbered (1)--(6), for membership or non-membership of $n$, $b_1$ and $b_2$ in some maximal transitive vertex set $W$ of $T$.
The cases corresponding to leaves (2) and (4) will be considered later.
Let us now bound the number of possible $W$ for the other cases (1), (3), (5) and (6).\\
\begin{figure}[tbp]
 \centering
% \beginpgfgraphicnamed{searchtree}
 \begin{tikzpicture}
  \tikzstyle{level 1}=[sibling distance=50mm,level distance=15mm]
  \tikzstyle{level 2}=[sibling distance=30mm,level distance=15mm]
  \tikzstyle{level 3}=[sibling distance=25mm,level distance=15mm]

  \node (b1) {$b_1$}
    child {
     node (b2) {$b_2$}
       child {
        node (c1) {(1)$\tau_1(W)=(n) \atop M(n-3)$}
        edge from parent node[left] {$\notin W$}
       }
       child {
        node (n2) {$n$}
          child {
           node (c2) {(2)}
           edge from parent node[left] {$\notin W$}
          }
          child {
           node (c3) {(3)$\tau_2(W)=(b_2,n) \atop M(s_{b_2}-1)$}
           edge from parent node[right] {$\in W$}
          }
        edge from parent node[right] {$\in W$}
       }
     edge from parent node[left] {$\notin W$}
    }
    child {
     node (n) {$n$}
       child {
        node (c4) {(4)}
        edge from parent node[left] {$\notin W$}
       }
       child {
        node (b22) {$b_2$}
          child {
           node (c5) {(5)$\tau_2(W)=(b_1,n) \atop M(s_{b_1}-2)$}
           edge from parent node[left] {$\notin W$}
          }
          child {
           node (c6) {(6)$\tau_3(W)=(b_1,b_2,n) \atop M(s_{b_1}-2)$}
           edge from parent node[right] {$\in W$}
          }
        edge from parent node[right] {$\in W$}
       }
     edge from parent node[right] {$\in W$}
    };
 \end{tikzpicture}
% \endpgfgraphicnamed
%    \includegraphics{searchtree.pdf}
 \caption{\label{fig:searchtree}Different possibilities for a maximal transitive vertex set $W$.}
\end{figure}
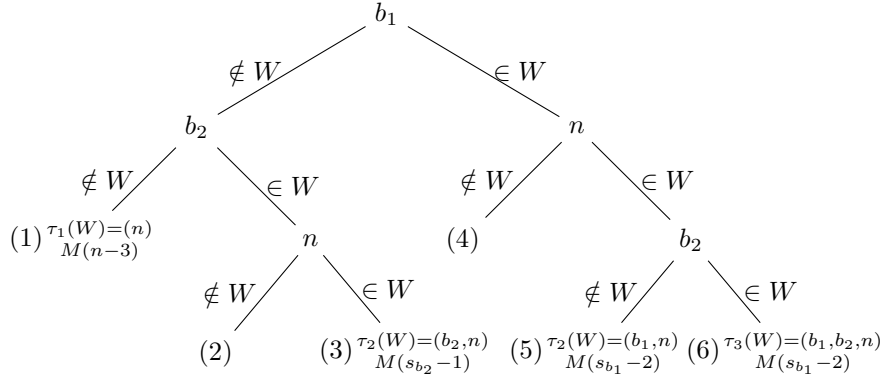

\begin{boldclaim}\label{cl:1356}
 Among all maximal transitive vertex sets $W$ of $T$,
 \begin{itemize}
  \item[(1)] at most $M(n-3)$ are such that $b_1 \notin W$ and $b_2 \notin W$,
  \item[(3)] at most $M(s_{b_2}-1)$ are such that $b_1 \notin W$, $b_2 \in W$ and $n \in W$,
  \item[(5)] at most $M(s_{b_1}-2)$ are such that $b_1 \in W$, $b_2 \notin W$ and $n\in W$, and
  \item[(6)] at most $M(s_{b_1}-2)$ are such that $b_1 \in W$, $b_2 \in W$ and $n\in W$.
 \end{itemize}
\end{boldclaim}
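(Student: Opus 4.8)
The plan is to prove each of the four bounds (1), (3), (5), (6) by the same device that was already used in Case~1: once we know which of the vertices $n$, $b_1$, $b_2$ lie in a maximal transitive vertex set $W$, we pin down the first one or two vertices of the topological order $\tau$ restricted to $W$, and then the remainder of $W$ is a maximal transitive vertex set of the out-neighbourhood of that source (inside the appropriate subtournament), which is counted by $M$ of a score minus the number of already-fixed predecessors. The key structural fact used repeatedly is that $N^-(n)=\{b_1,b_2\}$ with $b_1\to b_2$, so $n$ has in-degree~$2$ and $b_2$ has $n$ and $b_1$ as its only relevant in-neighbours among $\{n,b_1,b_2\}$.

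First I would do case~(1): if $b_1\notin W$ and $b_2\notin W$, then $n$ has no in-neighbour in $W$ at all (its only in-neighbours are $b_1,b_2$), so either $n\notin W$ — but then $W\cup\{n\}$ would still be transitive, contradicting maximality — or $n\in W$ and $\tau_1(W)=(n)$. Hence $W\setminus\{n\}$ is a maximal transitive vertex set of $T[N^+(n)]$, and $|N^+(n)|=s_n=n-3$, giving at most $M(n-3)$ such $W$. For case~(3), $b_1\notin W$, $b_2\in W$, $n\in W$: the only in-neighbour of $n$ in $W$ is $b_2$, so $\tau_1(W\setminus\{b_2\})$ cannot be anything whose out-neighbourhood misses $n$; more simply, inside $W$ the vertex $n$ is beaten only by $b_2$, so $b_2$ must be the source of $W$ (any earlier vertex would be an in-neighbour of $n$ other than $b_2$, impossible), i.e.\ $\tau_1(W)=(b_2)$. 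Then $W\setminus\{b_2\}$ is a maximal transitive vertex set of $T[N^+(b_2)]$, a tournament on $s_{b_2}$ vertices, but it must contain $n$; dropping that constraint only increases the count, so there are at most $M(s_{b_2}-1)$ such $W$ — wait, I should be careful: the clean bound is that $W\setminus\{b_2\}$ lives in $T[N^+(b_2)]$ which has $s_{b_2}$ vertices, and since $b_1\notin W$ while $b_1\in N^+(b_2)$? No — $b_1\to b_2$, so $b_1\in N^-(b_2)$, not $N^+(b_2)$. The right accounting is that one predecessor ($b_2$) is fixed and $n$ is forced in, so $W\setminus\{b_2,n\}$ is a maximal transitive vertex set of $T[N^+(b_2)\cap N^+(n)]$ extended appropriately; the stated bound $M(s_{b_2}-1)$ comes from treating $n$ as already committed inside the $s_{b_2}$-vertex out-neighbourhood, leaving $s_{b_2}-1$ free vertices. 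For cases~(5) and~(6), $b_1\in W$ and $n\in W$: now $b_1$ beats both $b_2$ and $n$, and $n$'s only other possible in-neighbour in $W$ is $b_2$. If $b_2\notin W$ (case~5), then $b_1$ is the unique in-neighbour of $n$ in $W$, forcing $\tau_1(W)=(b_1)$, and additionally $b_2\notin W$ with $b_2\in N^+(b_1)$ so one out-neighbour of $b_1$ is excluded and $n$ is included; after fixing $b_1$ as the source we are left with a maximal transitive set inside $T[N^+(b_1)]$ ($s_{b_1}$ vertices) that excludes $b_2$ and includes $n$, i.e.\ effectively $s_{b_1}-2$ free choices, giving $M(s_{b_1}-2)$. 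If $b_2\in W$ (case~6), then $\{b_1,b_2\}\subseteq W$ with $b_1\to b_2$, so $\tau_1(W)=(b_1)$ again (since $n$ forces the source to be an in-neighbour of $n$, and $b_1\to b_2$ means $b_1$ precedes $b_2$), and now $W\setminus\{b_1\}$ is a maximal transitive set of $T[N^+(b_1)]$ containing both $b_2$ and $n$, two committed vertices, leaving $s_{b_1}-2$ free, hence $M(s_{b_1}-2)$.

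The main obstacle I anticipate is getting the bookkeeping exactly right in cases~(3), (5), (6): one must argue cleanly why fixing the source $b_1$ (or $b_2$) plus forcing membership/non-membership of one or two further vertices reduces the count to $M$ of the score minus one or two, and in particular why $\tau_1(W)$ is forced. The argument for ``$\tau_1(W)$ equals $b_1$'' (or $b_2$) hinges on the following: the source of a transitive vertex set $W$ is the unique vertex of $W$ with no in-neighbour inside $W$; since $n\in W$ in all of cases~(3),(5),(6), the source beats $n$, hence lies in $N^-(n)=\{b_1,b_2\}$, and then $b_1\to b_2$ resolves which one it is given the membership pattern. I would also need to double-check that $n\notin W$ is genuinely impossible in case~(1) — that is the maximality argument $W\cup\{n\}$ transitive — and that in cases~(5),(6) the vertices being forced in or out are indeed in the out-neighbourhood of the source so that subtracting from $s_{b_1}$ is legitimate; this is where a small figure (Fig.~\ref{fig:searchtree}) and careful arc-direction tracking pay off. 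None of the individual arguments is deep, but the claim bundles four of them and the indices must line up for the later recurrence in cases~(2) and~(4) to close.
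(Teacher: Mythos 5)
Your proposal is correct and follows essentially the same route as the paper: in each of (1), (3), (5), (6) you use $N^-(n)=\{b_1,b_2\}$ and $b_1\rightarrow b_2$ to pin down the initial segment of $\tau(W)$ (namely $\tau_1(W)=(n)$, $\tau_2(W)=(b_2,n)$, $\tau_2(W)=(b_1,n)$, $\tau_3(W)=(b_1,b_2,n)$ respectively), and then bound the remainder by $M$ of the source's score minus the number of already-placed out-neighbours, exactly as in the paper's proof of Claim~\ref{cl:1356}. The small mid-argument wobble in case (3) is resolved correctly, and the final bookkeeping matches the paper's.
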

\begin{proof}
If (1) $b_1 \notin W$ and $b_2 \notin W$, then $n\in W$ by maximality of $W$ and $n$ is the source of $T[W]$ as no vertex in $W$ dominates $n$. Thus, there are at most $M(s_n)=M(n-3)$ such $W$.
If (3) $b_1 \notin W$, $b_2 \in W$ and $n \in W$, then $\tau_1(W\setminus \{b_2\})=(n)$. Therefore, $\tau_2(W)=(b_2,n)$ and there are at most $M(s_{b_2}-1)$ such $W$.
If (5) $b_1 \in W$, $b_2 \notin W$ and $n\in W$, then $\tau_2(W)=(b_1,n)$, and as $b_1$ dominates $b_2$, there are at most $M(s_{b_1}-2)$ such $W$.
If (6) $b_1 \in W$, $b_2 \in W$ and $n\in W$, then $\tau_3(W)=(b_1,b_2,n)$, and there are at most $M(s_{b_1}-2)$ such $W$.
\end{proof}

\noindent
To bound the number of subtournaments corresponding to the conditions in leaves (2) and (4), we will consider five subcases depending on the scores of $b_1$ and $b_2$. If $b_1$ and $b_2$ have low scores (Cases 2.4 and 2.5), there are few maximal transitive subtournaments of $T$ corresponding to the conditions in the leaves (3), (5) and (6). Then, it will be sufficient to group the cases (2) and (4) into one case where $n \notin W$ and to note that there are at most $M(n-1)$ such subtournaments. Otherwise, if the scores of $b_1$ and $b_2$ are high (Cases 2.1 -- 2.3), we use that in (2), some vertex of $N^-(b_2)$ is the source of $W$. If this were not the case, $W$ would not be maximal as $W \cup \{n\}$ would induce a transitive tournament. Similarly, in (4) some vertex of $N^-(b_1)$ is the source of $W$ if $b_2 \notin W$.

Let $c_1,\hdots,c_{|N^-(b_1)|}$ be the in-neighbors of $b_1$ such that $c_i \rightarrow c_{i+1}$ for all $i \in \{1,\hdots,\linebreak |N^-(b_1)|-1\}$ (such an ordering exists as every tournament has a Hamiltonian path by \cite{Redei34}; see \cite{BangJensenGutin2002} for a short proof) and let $d_1,\hdots,d_{|N^-(b_2)|-1}$ be the in-neighbors of $b_2$ besides $b_1$ such that $d_i \rightarrow d_{i+1}$ for all $i \in \{1,\hdots,|N^-(b_2)|-2\}$.

Let us first bound the number of subtournaments satisfying the conditions of (2) depending on $s_{b_2}$.

\begin{boldclaim}\label{cl:2.n-3}
 If $s_{b_2}=n-3$, there are at most $M(s_{d_1}-1)$ maximal transitive vertex sets $W$ such that $b_1 \notin W$, $b_2 \in W$ and $n \notin W$.
\end{boldclaim}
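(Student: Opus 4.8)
The plan is to argue that when $s_{b_2}=n-3$, the in-neighborhood $N^-(b_2)\setminus\{b_1\}$ is forced to be a single vertex, and then to follow the same ``peel off a prefix of the topological order'' strategy used in Claim~\ref{cl:1356}. Since $s_{b_2}=n-3$ and $b_2$ beats $n$ with $b_1\rightarrow b_2$, vertex $b_2$ loses to exactly two vertices, namely $b_1$ and one other vertex, which is precisely $d_1$ (in the notation fixed just before the claim, $N^-(b_2)=\{b_1,d_1\}$ so $|N^-(b_2)|-1=1$). So the configuration we must count is: $W$ a maximal transitive vertex set with $b_1\notin W$, $b_2\in W$, $n\notin W$.

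First I would pin down the source of $T[W]$. Inside $W$, vertex $b_2$ is beaten only by vertices in $N^-(b_2)=\{b_1,d_1\}$; since $b_1\notin W$, the only possible in-neighbor of $b_2$ inside $W$ is $d_1$. I claim $d_1\in W$: if not, then no vertex of $W$ beats $b_2$, so $b_2$ would be the source of $T[W]$, and then $W\cup\{n\}$ would still induce a transitive tournament (every vertex of $W$ is beaten by $b_2$, hence—using $b_2\rightarrow n$ is false; rather $b_2$ is beaten by nothing in $W$ so $b_2$ is a source, and $n$ is beaten by $b_2$... ) — more carefully, the standard maximality argument applies: since $b_2\in W$ and in $T[W]$ vertex $b_2$ has in-degree zero, and since $n$ is beaten only by $b_1,b_2$ among all vertices, with $b_1\notin W$, the vertex $n$ is beaten within $W\cup\{n\}$ only by $b_2$; so $n$ could be placed immediately after $b_2$ in the topological order, contradicting maximality of $W$. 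Hence $d_1\in W$, and since $d_1$ is the unique in-neighbor of $b_2$ inside $W$, we get $\tau_1(W)=(d_1)$ and indeed $\tau_2(W)=(d_1,b_2)$ with $b_2$ the unique vertex of $W\setminus\{d_1\}$ having no in-neighbor in $W\setminus\{d_1\}$.

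Now I would count. Peeling off $d_1$: by the first structural property of maximal transitive subtournaments recalled in Section~\ref{sec:upperbound}, $T[W\setminus\{d_1\}]$ is a maximal transitive subtournament of $T[N^+(d_1)]$, and $b_2\in N^+(d_1)$. The set $W\setminus\{d_1\}$ is thus determined by a maximal transitive subtournament of $T[N^+(d_1)]$ whose source is $b_2$; such subtournaments are in bijection with maximal transitive subtournaments of $T[N^+(d_1)\cap N^+(b_2)]$, which has at most $s_{b_2}-1=n-4$ vertices, but more precisely at most $|N^+(d_1)\cap N^+(b_2)|\le s_{b_2}-1$ — however, to match the claimed bound $M(s_{d_1}-1)$ I instead peel so that $d_1$ is the source: the number of maximal transitive subtournaments whose topological order starts with $(d_1)$ is at most $M(s_{d_1})$, and since $b_2\in N^+(d_1)$ is forced to lie in every such $W$ we may further restrict to $N^+(d_1)\setminus\{b_2\}$... which gives $M(s_{d_1}-1)$ provided $b_2$ occupies the second position, which we established ($\tau_2(W)=(d_1,b_2)$). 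So the count is at most $M(s_{d_1}-1)$, as claimed.

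The main obstacle I anticipate is the bookkeeping in the last step: one must be careful that ``$\tau_2(W)=(d_1,b_2)$ is forced'' legitimately lets us pass from $M(s_{d_1})$ down to $M(s_{d_1}-1)$, i.e. that fixing the second vertex of the order to be the specific vertex $b_2$ genuinely removes one vertex from the relevant subtournament — this is exactly the same reduction already used for leaves (3), (5), (6) in Claim~\ref{cl:1356}, so the argument is parallel, but it hinges on having first correctly deduced that $N^-(b_2)=\{b_1,d_1\}$ and that $d_1$ must belong to $W$. Everything else is a routine application of the three properties of maximal transitive subtournaments stated at the start of Section~\ref{sec:upperbound}.
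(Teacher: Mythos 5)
Your proof is correct and follows essentially the same route as the paper's: you use maximality of $W$ (via the fact that $n$ is beaten only by $b_1,b_2$) to force the source of $T[W]$ to be an in-neighbor of $b_2$, observe that $s_{b_2}=n-3$ and $b_1\notin W$ leave $d_1$ as the only candidate, conclude $\tau_2(W)=(d_1,b_2)$, and peel off the forced prefix to land on $M(s_{d_1}-1)$. The only difference is presentational: the paper states the maximality argument once for all of Cases 2.1--2.3 and the counting step more tersely, whereas you re-derive it in place with some mid-paragraph self-correction; the mathematics is the same.
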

\begin{proof}
As mentioned above, some in-neighbor of $b_2$ is the source of $W$. As $s_{b_2}=n-3$, $N^-(b_2)\setminus \{b_1\} = \{d_1\}$. Thus, $\tau_2(W)=(d_1,b_2)$ and there are at most $M(s_{d_1}-1)$ such tournaments.
\end{proof}

\begin{boldclaim}\label{cl:2.n-4}
If $s_{b_2}=n-4$, there are at most~$M(n-5)+2M(s_{d_1}-2)$ maximal transitive vertex sets $W$ such that $b_1 \notin W$, $b_2 \in W$ and $n \notin W$.
\end{boldclaim}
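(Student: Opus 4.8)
The plan is to refine the analysis underlying leaf~(2) by branching on which of the two in-neighbours of $b_2$ distinct from $b_1$ lies in $W$, and, in each branch, to read off a short forced prefix of the topological order of $T[W]$.

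First I would record the local structure. Since $s_{b_2}=n-4$, vertex $b_2$ has in-degree $3$, so $N^-(b_2)=\{b_1,d_1,d_2\}$ with $d_1\to d_2$; and since $s_n=n-3$ we have $N^-(n)=\{b_1,b_2\}$, so $n\to d_1$ and $n\to d_2$. As observed above, for any $W$ meeting the conditions of leaf~(2) the source of $T[W]$ is an in-neighbour of $b_2$ — otherwise $b_2$ itself would be the source of $T[W]$ and $W\cup\{n\}$ would be transitive, contradicting maximality of $W$. Since $b_1\notin W$, the source of $T[W]$ is $d_1$ or $d_2$; moreover, if $d_1\in W$ it must be $d_1$ (were $d_2$ the source, it would beat $d_1$, contradicting $d_1\to d_2$), while if $d_1\notin W$ then maximality still forces some in-neighbour of $b_2$ into $W$, necessarily $d_2$, which is then the source.

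Next I would handle the branch $d_1\in W$, where $d_1$ is the source. If $d_2\notin W$, the second vertex of $\tau(W)$ beats every other vertex of $W\setminus\{d_1\}$, hence beats $b_2$, hence lies in $N^-(b_2)\setminus\{b_1,d_1\}=\{d_2\}$ or equals $b_2$; as $d_2\notin W$ it equals $b_2$, so $\tau_2(W)=(d_1,b_2)$. Then $T[W\setminus\{d_1,b_2\}]$ is a maximal transitive subtournament of $T[N^+(d_1)\cap N^+(b_2)]$ (prepending $d_1,b_2$ shows maximality is inherited), and since $b_2$ and $d_2$ both lie in $N^+(d_1)\setminus N^+(b_2)$, this subtournament has at most $s_{d_1}-2$ vertices; hence there are at most $M(s_{d_1}-2)$ such $W$. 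If instead $d_2\in W$, the same reasoning forces $\tau_2(W)=(d_1,d_2)$ and then $\tau_3(W)=(d_1,d_2,b_2)$, and $T[W\setminus\{d_1,d_2,b_2\}]$ is a maximal transitive subtournament of $T[N^+(d_1)\cap N^+(d_2)\cap N^+(b_2)]$, whose vertex set is contained in $N^+(d_1)\cap N^+(b_2)$, so again there are at most $M(s_{d_1}-2)$ such $W$. This branch therefore contributes at most $2M(s_{d_1}-2)$.

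Finally I would treat the branch $d_1\notin W$, in which $d_2\in W$ is the source. The second vertex of $\tau(W)$ beats $b_2$ and lies in $W\setminus\{d_2\}$, but $N^-(b_2)\setminus\{b_1,d_1,d_2\}=\emptyset$, so it must be $b_2$; thus $\tau_2(W)=(d_2,b_2)$ and $T[W\setminus\{d_2,b_2\}]$ is a maximal transitive subtournament of $T[N^+(d_2)\cap N^+(b_2)]$. Here the key point — which I expect to be the main obstacle, since the naive bound $M(s_{d_2}-1)\le M(n-4)$ would be too weak — is that $n\in N^+(b_2)\setminus N^+(d_2)$ (because $b_2\to n$ but $n\to d_2$), so $|N^+(d_2)\cap N^+(b_2)|\le |N^+(b_2)|-1=n-5$; hence there are at most $M(n-5)$ such $W$. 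Adding the three bounds and using that $M$ is monotone non-decreasing gives the claimed bound $M(n-5)+2M(s_{d_1}-2)$. Beyond this one observation the remaining work is bookkeeping: verifying the forced prefixes of $\tau(W)$ in each subcase (which repeatedly uses $N^-(b_2)=\{b_1,d_1,d_2\}$ together with $b_1\notin W$ and, where relevant, $d_1\notin W$), and checking that truncating $W$ by a fixed prefix leaves a \emph{maximal} transitive set in the corresponding induced subtournament.
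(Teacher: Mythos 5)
Your proof is correct and takes essentially the same route as the paper's: branch on whether $d_1\in W$ (and then on $d_2$), force the short prefix of $\tau(W)$ in each branch, and count suffixes, obtaining $M(n-5)$ when $d_1\notin W$ and $2M(s_{d_1}-2)$ otherwise. The only cosmetic difference is how the $n-5$ is justified: you exclude $n$ from $N^+(d_2)\cap N^+(b_2)$ via $n\rightarrow d_2$, while the paper reads off $M(s_{b_2}-1)=M(n-5)$ using $n\notin W$ directly.
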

\begin{proof}
If $d_1 \notin W$ then $\tau_2(W)=(d_2,b_2)$ and there are at most~$M(s_{b_2}-1) = M(n-5)$ such $W$. Otherwise, $d_1 \in W$ and either $d_2 \notin W$ in which case $\tau_2(W)=(d_1,b_2)$, or $d_2 \in W$ in which case $\tau_3(W)=(d_1,d_2,b_2)$. There are at most $2M(s_{d_1}-2)$ such $W$.
%In total, there are at most $M(n-5)+2M(s_{d_1}-2)$ maximal transitive sets containing $b_2$ but neither $b_1$ nor $n$.
\end{proof}

The next step is to bound the number of subtournaments satisfying the conditions of (4) depending on $s_{b_1}$.

\begin{boldclaim}\label{cl:4.n-3}
If $s_{b_1}=n-3$, the number of maximal transitive vertex sets $W$ such that $b_1 \in W$ and $n \notin W$ is at most $2M(n-5)+M(n-4)$ if $b_2$ dominates no vertex of $N^-(b_1)$, and otherwise at most $2M(n-5)+M(n-4)+M(n-6)$ if $s_{b_2}=n-3$ and at most $2M(n-5)+M(n-4)+3M(n-7)$ if $s_{b_2}=n-4$.
\end{boldclaim}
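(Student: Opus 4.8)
The plan is to analyze the maximal transitive vertex sets $W$ with $b_1 \in W$ and $n \notin W$ by case-splitting on which in-neighbors of $b_1$ serve as the source of $T[W]$, exactly as was done for leaf (2). As observed before the claim, since $s_{b_1} = n-3$ we have $N^-(b_1) = \{c_1, c_2\}$ with $c_1 \to c_2$, and whenever $b_2 \notin W$ some vertex of $N^-(b_1)$ must be the source of $W$ (otherwise $W \cup \{n\}$ is transitive, contradicting maximality). I would first dispose of the subcase $b_2 \in W$: then $b_2$ is the source of $T[W]$ or, if not, the source lies in $N^-(b_1) \cup N^-(b_2)$; but a cleaner route is to note $\tau$-prefixes of $W$ starting with $b_2$ or with a $c_i$, and in either situation the relevant score is at most $n-4$ or $n-5$, yielding the $2M(n-5) + M(n-4)$ term (this matches the bound claimed when $b_2$ beats no vertex of $N^-(b_1)$). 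Then I would handle $b_2 \notin W$, where the source is $c_1$ or $c_2$: if the source is $c_1$ there are at most $M(s_{c_1}-1)$ such $W$, and if the source is $c_2$ and $c_1 \notin W$ there are at most $M(s_{c_2}-1)$ such $W$ — but whether $b_2$ can be appended early (reducing the exponent by one more) depends precisely on whether $b_2$ beats $c_1$ or $c_2$, which is why the statement branches on "$b_2$ beats no vertex of $N^-(b_1)$."

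The heart of the argument is the refinement when $b_2$ does beat some $c_i$. In that situation $b_2$ can be inserted into $W$ after the source, so the recurrence gains an extra factor; to turn this into the stated numeric bounds I would bound $s_{c_1}, s_{c_2}$ from above using Lemma~\ref{thm:nummaxscore} together with the structural fact (used already in Case 1.3) that $N^-(b_1) \cup \{b_1, n\}$ cannot contain too many high-score vertices without forcing a strong component that contradicts the assumed case ordering or strongness. Concretely, for $s_{b_2} = n-3$ one extra reachable vertex gives the single additional $M(n-6)$ term, while for $s_{b_2} = n-4$ the vertex $b_2$ has an extra in-neighbor $d_1$ (besides $b_1$) that can also be slotted in, and a short case split on membership of $d_1$ and $d_2$ in $W$ produces the $3M(n-7)$ term. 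Throughout, I would use $M$'s monotonicity ($M(n) \ge M(n-1)$, established in Section~\ref{sec:lowerbound}) to replace each $M(s_x - j)$ by $M(n - \text{const})$.

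The main obstacle I anticipate is bookkeeping: there are several overlapping sub-possibilities ($b_2 \in W$ vs.\ $b_2 \notin W$; which $c_i$ is the source; which of $c_1, c_2, d_1, d_2$ lie in $W$), and one must be careful that the three disjuncts "$b_2$ beats no vertex of $N^-(b_1)$," "$s_{b_2} = n-3$," "$s_{b_2} = n-4$" genuinely exhaust the relevant configurations given that Cases 2.1–2.3 are the high-score regime (so $s_{b_2} \ge n-4$). I would also need to verify that no double-counting occurs between the $W$ with $b_2 \in W$ and those with $b_2 \notin W$ (they are disjoint by definition, so this is automatic) and that every claimed prefix $\tau_j(W)$ is forced — e.g.\ that when $c_i$ is the source, no vertex other than the designated ones can precede $b_1$ or $b_2$ in the topological order. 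Once the configurations are correctly enumerated, each bound is a direct count of maximal transitive vertex sets of an out-neighborhood, and the numeric inequalities against $\beta^n$ are routine and deferred to where these claims are combined.
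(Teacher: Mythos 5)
Your decomposition is organized around the wrong dichotomy, and as a result the accounting does not reach the claimed bound. The paper's proof splits on whether $N^-(b_1)\cap W$ is empty, i.e.\ on whether the source of $T[W]$ is $c_1$, $c_2$, or $b_1$ itself. The term $M(n-4)+2M(n-5)$ counts \emph{all} $W$ whose source lies in $N^-(b_1)=\{c_1,c_2\}$ (prefixes $(c_2,b_1)$, $(c_1,b_1)$, $(c_1,c_2,b_1)$), irrespective of whether $b_2\in W$. The extra terms $M(n-6)$ and $3M(n-7)$ count the remaining $W$ with $N^-(b_1)\cap W=\emptyset$: there $b_1$ is the source, and maximality (one cannot add $n$) forces $b_2\in W$ together with some $d\in N^-(b_2)\setminus\{b_1\}$ in $W$, so the prefix is $(b_1,d_1,b_2)$ (resp.\ one of three prefixes through $d_1,d_2$ when $s_{b_2}=n-4$), which is what drives the exponent down to $n-6$ (resp.\ $n-7$). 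You instead split on $b_2\in W$ versus $b_2\notin W$ and assign $2M(n-5)+M(n-4)$ to the subcase $b_2\in W$ alone; but then your subcase $b_2\notin W$ contributes a further $M(s_{c_1}-1)+M(s_{c_2}-1)$, roughly $2M(n-4)$, on top, and the sum overshoots every one of the three claimed bounds. The configurations with source $c_1$ or $c_2$ occur in both of your branches, so your split inherently counts them twice unless you do substantially more work than you sketch.

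Two further points. First, the condition ``$b_2$ beats no vertex of $N^-(b_1)$'' is not about whether $b_2$ ``can be appended early after the source $c_i$''; its role is to show that the case $N^-(b_1)\cap W=\emptyset$ is \emph{empty}: in that case the forced witness $d\in N^-(b_2)\cap W$ lies outside $N^-(b_1)$, so $N^-(b_2)\not\supseteq N^-(b_1)$, i.e.\ $b_2$ beats some vertex of $N^-(b_1)$; contrapositively, if $b_2$ beats no such vertex, the extra term vanishes. Your proposal misattributes the branching condition and therefore cannot justify why the bound drops to $2M(n-5)+M(n-4)$ in that branch. Second, $b_2$ can never be the source of $T[W]$ here, since $b_1\in W$ and $b_1\rightarrow b_2$; the source is always one of $b_1,c_1,c_2$. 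The proof is a clean three-way split on which of these is the source, and that idea is missing from your write-up.
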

\begin{proof}
If $N^-(b_1)\cap W \not = \emptyset$, then $c_1$ or $c_2$ is the source of $W$. The number of subsets $W$ such that $c_1 \notin W$, and thus $\tau_2(W)=(c_2,b_1)$, is at most $M(s_{c_2}-1) \le M(n-4)$. The number of subsets $W$ such that $c_1 \in W$, and thus $\tau_3(W)=(c_1,c_2,b_1)$ or $\tau_2(W)=(c_1,b_1)$, is at most $2M(s_{c_1}-2)\le 2M(n-5)$. If, on the other hand, $N^-(b_1) \cap W = \emptyset$, then $\tau_1(W)=(b_1)$ and some in-neighbor of $b_2$ is the source of $T[W\setminus \{b_1\}]$, otherwise $W$ is not maximal as $n$ can be added. Also note that $b_2$ dominates some vertex of $N^-(b_1)$ (we have $N^-(b_2)\setminus N^-(b_1) \not = \emptyset$ as $N^-(b_1)\cap W = \emptyset$ but $N^-(b_2)\cap W \not = \emptyset$). If $s_{b_2}=n-3$, we upper bound the number of such subsets $W$ by $M(s_{b_1}-3)=M(n-6)$ as $\tau_3(W)=(b_1,d_1,b_2)$. If $s_{b_2}=n-4$, we have that $\tau_4(W)=(b_1,d_1,d_2,b_2)$, $\tau_3(W)=(b_1,d_2,b_2)$ or $\tau_3(W)=(b_1,d_1,b_2)$. Thus, there are at most $3M(s_{b_1}-4)=3M(n-7)$ possible $W$ such that $N^-(b_1) \cap W = \emptyset$ if $s_{b_1}=n-3$ and $s_{b_2}=n-4$. Summarizing, there are at most $2M(n-5)+M(n-4)$ subsets $W$ if $b_2$ dominates no vertex of $N^-(b_1)$, and otherwise at most $2M(n-5)+M(n-4)+M(n-6)$ subsets $W$ if $s_{b_2}=n-3$ and at most $2M(n-5)+M(n-4)+3M(n-7)$ subsets $W$ if $s_{b_2}=n-4$.
\end{proof}

\begin{boldclaim}\label{cl:4.n-4}
If $s_{b_1}=n-4$ and $s_{b_2}=n-3$, the number of maximal transitive vertex sets $W$ such that $b_1 \in W$ and $n \notin W$ is
\begin{itemize}
 \item
  at most $M(n-7)+\sum_{c\in N^-(b_1)}2M(s_c-2)$ if $T[N^-(b_1)]$ is a directed cycle,
% \item
%  at most $\sum_{c\in N^-(b_1)}2M(s_c-2)$ if $T[N^-(b_1)]$ is a directed cycle and $d_1 \in N^-(b_1)$,
 \item
  at most $\max\{M(n-3)+M(n-4)+M(n-5) ; M(n-5)+6M(n-6)\}$ if $T[N^-(b_1)]$ is transitive and $d_1 \in N^-(b_1)$, and
 \item
  at most $M(n-3)+M(n-4)+M(n-5)+M(n-7)$ if $T[N^-(b_1)]$ is transitive and $d_1 \notin N^-(b_1)$.
\end{itemize}
\end{boldclaim}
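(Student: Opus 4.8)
\emph{Setup and the main tool.}
Since $s_{b_1}=n-4$, the vertex $b_1$ has exactly three in-neighbours; write $N^-(b_1)=\{c_1,c_2,c_3\}$ with $c_1\to c_2\to c_3$ along a Hamiltonian path, so that $T[N^-(b_1)]$ is either the transitive triangle ($c_1\to c_3$) or the directed triangle ($c_3\to c_1$). Since $s_{b_2}=n-3$ and $b_1\to b_2$ we have $N^-(b_2)=\{b_1,d_1\}$, and the three bullets are exactly the cases "$T[N^-(b_1)]$ a directed triangle'', "$T[N^-(b_1)]$ transitive with $d_1\in N^-(b_1)$'', and "$T[N^-(b_1)]$ transitive with $d_1\notin N^-(b_1)$''. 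The workhorse is: if $b_1\in W$ and $S:=N^-(b_1)\cap W$, then in the topological order of $T[W]$ the first $|S|$ vertices are exactly those of $S$ (ordered as in $T[S]$), the next is $b_1$, and deleting this prefix leaves a maximal transitive vertex set of $T[\,N^+(b_1)\cap\bigcap_{c\in S}N^+(c)\,]$; hence the number of $W$ we must count is at most $\sum_{S}M(g(S))$ over admissible $S$, where $g(S):=|N^+(b_1)\cap\bigcap_{c\in S}N^+(c)|$. A one-line count gives $|N^+(b_1)\cap N^+(c_i)|=s_{c_i}-1$ minus the number of $c_j$ beaten by $c_i$, i.e.\ $s_{c_1}-3,\ s_{c_2}-2,\ s_{c_3}-1$ in the transitive case and $s_{c_i}-2$ for each $i$ in the directed-triangle case.

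\emph{Pruning via maximality.}
Two facts, both using $N^-(n)=\{b_1,b_2\}$ and $n\notin W$, supply all the savings. (a) If $b_2\notin W$ then $b_1$ is not the source of $T[W]$ (else $W\cup\{n\}$ would be transitive), so $S\neq\emptyset$. (b) If $b_2\in W$ and $b_1$ is the source of $T[W]$, then $d_1$ is the source of $T[W\setminus\{b_1\}]$ (else $n$ can be appended after the prefix $b_1,b_2$), so $\tau_3(W)=(b_1,d_1,b_2)$; symmetrically, when $b_2\notin W$ the set $\{b_1,d_1\}\cap W$ is not an initial segment of $T[W]$, because otherwise $b_2$ could be added. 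Observe that (b) is vacuous once $d_1\in N^-(b_1)$ (then $d_1\in S=\emptyset$ is impossible) — this is precisely why an $M(n-7)$ term occurs in the first and third bullets but not the second.

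\emph{The two clean cases.}
Split on whether $S=\emptyset$. If $S=\emptyset$ ($b_1$ is the source), then by (a) $b_2\in W$, by (b) $\tau_3(W)=(b_1,d_1,b_2)$, and $N^+(b_1)\cap N^+(d_1)\cap N^+(b_2)$ omits the six distinct vertices $b_1,b_2,d_1,c_1,c_2,c_3$ as well as $n$ (since $n\to d_1$); so it has at most $n-7$ vertices and this case contributes at most $M(n-7)$, occurring only when $d_1\notin N^-(b_1)$. If $T[N^-(b_1)]$ is a directed triangle and $S\neq\emptyset$: each $c$ has out-degree $1$ inside $T[N^-(b_1)]$, so a singleton $S=\{c\}$ contributes $M(s_c-2)$; moreover $T[S]$ transitive forces $|S|\le 2$, and the three $2$-element sets are exactly the three arcs of the cycle, each contributing $M(s_c-2)$ for the head $c$ of that arc. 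Summing the empty, singleton and two-element contributions yields the first bullet, $M(n-7)+\sum_{c\in N^-(b_1)}2M(s_c-2)$.

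\emph{The transitive case — the main obstacle.}
When $T[N^-(b_1)]$ is transitive and $S\neq\emptyset$, the source of $T[W]$ is the $c_i$ of least index in $S$, and the seven nonempty $S$ would naively each cost up to $M(s_{c_3}-1)\le M(n-3)$, far too much; the real work is to extract the true bound. The plan is to split on membership of $b_2$: when $b_2\in W$ it sits at position $2$ or $3$ of $T[W]$ (it has two in-neighbours), position $2$ forcing $b_1$ to be the source (the $S=\emptyset$ case), and position $3$ pinning down $\tau_3(W)=(d_1,b_1,b_2)$ if $d_1\in N^-(b_1)$ — contributing $\le M(n-6)$ by a computation as above — or $\tau_3(W)=(b_1,d_1,b_2)$, again $S=\emptyset$, if $d_1\notin N^-(b_1)$. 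When $b_2\notin W$, fact (a) gives $S\neq\emptyset$, and one combines the "$b_2$ cannot be added'' half of (b) with the fact that $b_2$ beats every $c_i$ when $d_1\notin N^-(b_1)$ (so the suffix, and even its ambient out-neighbourhood, omit $b_2$ and $n$) to kill all but the short prefixes, leaving $M(n-3)+M(n-4)+M(n-5)$ (arguments $s_{c_3}-1\le n-3$, $s_{c_2}-2\le n-4$, $s_{c_1}-3\le n-5$, since scores are at most $n-2$); in the second bullet a further sub-division according to the position of $d_1$ inside $T[N^-(b_1)]$ produces the alternative grouping $M(n-5)+6M(n-6)$, whence the $\max$, and in the third bullet one adds the $S=\emptyset$ term $M(n-7)$. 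I expect the genuinely delicate step to be this last verification: that maximality really does eliminate every prefix/suffix combination outside these short lists, so that the seven a-priori cases collapse to three (plus one $M(n-7)$), and that the two regimes in the second bullet are correctly separated; everything else is routine counting.
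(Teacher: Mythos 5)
Your machinery --- decomposing by $S=N^-(b_1)\cap W$, charging $M(g(S))$ per admissible $S$, and the maximality facts (a) and (b) --- is sound, and it correctly establishes the first bullet and the $M(n-7)$ term for $S=\emptyset$, essentially as in the paper. The genuine gap is exactly where you flag it: the transitive case with $S\neq\emptyset$. The mechanism you invoke to ``kill all but the short prefixes'' cannot work: when $d_1\notin N^-(b_1)$ and $x\in S\neq\emptyset$, the sets $W\cup\{n\}$ and $W\cup\{b_2\}$ automatically contain the triangles $x\to b_1\to n\to x$ and $x\to b_1\to b_2\to x$, so maximality with respect to $n$ and $b_2$ imposes no restriction at all on which of the seven nonempty $S$ occur, and in general several of them are realized by distinct maximal $W$ with the same source. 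Summing $M(g(S))$ over all seven nonempty $S$ gives at best $M(n-4)+2M(n-5)+4M(n-6)$ (using $s_v\le n-3$, valid in Case~2), and $\beta^2+2\beta+4>\beta^3+\beta^2+\beta$ for $\beta=1.6740$, so this is strictly weaker than the claimed $M(n-3)+M(n-4)+M(n-5)$: your decomposition cannot reach the stated bound, and the three terms you write down (with multiplicity one each) are unjustified.

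The paper closes this case by a different grouping: for $S\neq\emptyset$ the source of $T[W]$ beats $b_1$ and hence is some $c_i\in S$, and all $W$ with source $c_i$ are charged a single term $M(s_{c_i})$ (since $W\setminus\{c_i\}$ is a maximal transitive vertex set of $T[N^+(c_i)]$), i.e.\ one term per source rather than one per subset. The substantive step is then proving the score bounds $s_{c_1}\le n-3$, $s_{c_2}\le n-4$, $s_{c_3}\le n-5$ via Lemma~\ref{thm:nummaxscore}; the alternative $M(n-5)+6M(n-6)$ in the second bullet is precisely the fallback when these bounds fail (which forces $s_{c_3}=n-4$ and $d_1=c_3$), and only there does the paper enumerate the seven prefixes $\tau_{|S|+1}(W)$. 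Your sketch never establishes these score bounds and inverts the logic of when the two alternatives apply. Finally, your separate $M(n-6)$ term for the sub-case $b_2\in W$ with $\tau_3(W)=(d_1,b_1,b_2)$ is an additional summand that does not fit inside the claimed maximum (in the paper those $W$ are already absorbed into the source-$c_i$ groups), so even the shape of your final inequality for the second bullet is off. The second and third bullets therefore remain unproved.
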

\begin{proof}
If $c_3 \rightarrow c_1$, then $W$ intersects $N^-(b_1)$ in at most $2^3-1=7$ possible ways ($N^-(b_1) \subseteq W$ would induce a cycle in $T[W]$). In one of them, $N^-(b_1) \cap W = \emptyset$, which implies $\tau_3(W)=(b_1,d_1,b_2)$; there are at most $M(s_{b_1}-3)=M(n-7)$ such $W$. For each $c \in N^-(b_1)$, there are 2 possibilities where $\tau_1(W)=(c)$; one where $\tau_2(W)=(c,b_1)$ and one where $\tau_3(W)=(c,y,b_1)$ where $y$ is the out-neighbor of $c$ in $N^-(b_1)$; there are $2M(s_c-2)$ such $W$ for each choice of $c$. In total, there are at most $M(n-7)+\sum_{c\in N^-(b_1)}2M(s_c-2)$ possible $W$. % if $d_1 \notin N^-(b_1)$ and at most $\sum_{c\in N^-(b_1)}2M(s_c-2)$ otherwise.

If, on the other hand, $c_1 \rightarrow c_3$, first assume that $s_{c_1} \le n-3$, $s_{c_2} \le n-4$, and $s_{c_3} \le n-5$. Then either some vertex of $N^-(b_1)$ is the source of $W$ (at most $M(n-3)+M(n-4)+M(n-5)$ possibilities for $W$), or $\tau_3(W)=(b_1,d_1,b_2)$ (at most $M(n-7)$ possibilities for $W$). Otherwise, it must be that $s_{c_1}\le n-3$, $s_{c_2}\le n-4$, $s_{c_3} = n-4$ and that $d_1=c_3$. Then, $\tau_2(W)=(c_3,b_1)$, $\tau_2(W)=(c_2,b_1)$, $\tau_3(W)=(c_2,c_3,b_1)$, $\tau_2(W)=(c_1,b_1)$, $\tau_3(W)=(c_1,c_2,b_1)$, $\tau_3(W)=(c_1,c_3,b_1)$, or $\tau_4(W)=(c_1,c_2,c_3,b_1)$; there are at most $M(n-5)+6M(n-6)$ such $W$. In total, if $d_1 \in N^-(b_1)$, the number of possible $W$ can be upper bounded by $\max\{M(n-3)+M(n-4)+M(n-5) ; M(n-5)+6M(n-6)\}$, and if $d_1 \notin N^-(b_1)$, the number of possible $W$ can be upper bounded by $M(n-3)+M(n-4)+M(n-5)+M(n-7)$.
\end{proof}

\noindent
Armed with Claims~\ref{cl:2.n-3}--\ref{cl:4.n-4}, we now analyze the five subcases of Case 2, depending on the scores of $b_1$ and $b_2$.\\

\noindent
\textbf{Case 2.1: $\mathbf{s_{b_1} = n-3,s_{b_2} = n-3}$.}
%By Lemma~\ref{thm:nummaxscore}, at most one vertex other than $n,b_1,b_2$ has score $n-3$.
By Claim~\ref{cl:2.n-3}, the number of maximal transitive vertex sets $W$ such that $b_1,n \notin W$ and $b_2\in W$ (leaf (2) in Fig.~\ref{fig:searchtree}) is at most $M(n-4)$. By Claim~\ref{cl:4.n-3}, the number of maximal transitive vertex sets $W$ such that $b_1,n \notin W$ and $b_2\in W$ (leaf (4) in Fig.~\ref{fig:searchtree}) is at most $2M(n-5)+M(n-4)$, at most $2M(n-5)+M(n-4)+M(n-6)$, or at most $2M(n-5)+M(n-4)+3M(n-7)$.
Combined with Claim~\ref{cl:1356},

\begin{align*}
f(T) &\leq \max
%\begin{cases}
%M(n-3)+M(n-4)+M(n-4)+2M(n-5)\\
%\quad +M(n-4)+M(n-5)+M(n-5)\\
%\quad \leq 4\beta^{n-5}+3\beta^{n-4}+\beta^{n-3}\leq\beta^n \text{ as } \beta \ge 1.6314 \enspace ,\\
%M(n-3)+M(n-4)+M(n-4)+2M(n-5)\\
%\quad +M(n-4)+M(n-6)+M(n-5)+M(n-5)\\
%\quad \leq \beta^{n-6}+4\beta^{n-5}+3\beta^{n-4}+\beta^{n-3}\leq\beta^n \text{ as } \beta \ge 1.6516\enspace ,\\
%M(n-3)+M(n-4)+M(n-4)+2M(n-5)\\
%\quad +M(n-4)+3M(n-7)+M(n-5)+M(n-5)\\
%\quad \leq 3\beta^{n-7}+4\beta^{n-5}+3\beta^{n-4}+\beta^{n-3}\leq\beta^n \text{ as } \beta \ge 1.6666 \enspace .
%\end{cases}
\begin{cases}
M(n-3)+M(n-4)+M(n-4)+(2M(n-5)\\
  \quad +M(n-4))+M(n-5) +M(n-5)\\
  \hfill \leq 4\beta^{n-5}+3\beta^{n-4}+\beta^{n-3}\leq\beta^n \text{ as } \beta \ge 1.6314 \enspace ,\\
M(n-3)+M(n-4)+M(n-4)+(2M(n-5)\\
  \quad +M(n-4)+M(n-6)) +M(n-5)+M(n-5)\\
  \hfill \leq \beta^{n-6}+4\beta^{n-5}+3\beta^{n-4}+\beta^{n-3}\leq\beta^n \text{ as } \beta \ge 1.6516\enspace ,\\
M(n-3)+M(n-4)+M(n-4)+(2M(n-5)\\
  \quad +M(n-4)+3M(n-7)) +M(n-5)+M(n-5)\\
  \hfill \leq 3\beta^{n-7}+4\beta^{n-5}+3\beta^{n-4}+\beta^{n-3}\leq\beta^n \text{ as } \beta \ge 1.6666 \enspace .
\end{cases}
\end{align*}

\noindent
\textbf{Case 2.2: $\mathbf{s_{b_1} = n-3,s_{b_2} = n-4}$.}
If $c_1 \rightarrow b_2$ and $c_2 \rightarrow b_2$, then $b_1 \notin W$ and $b_2 \in W$ implies that some in-neighbor $c$ of $b_1$ is in $W$, otherwise $W\cup\{b_1\}$ would induce a transitive tournament. But then, $n \notin W$, otherwise $\{c,b_2,n\}$ induces a directed cycle. This means that no maximal transitive vertex set $W$ satisfies the conditions of leaf (3) in Fig.~\ref{fig:searchtree}. We bound the possible $W$ corresponding to leaves (2)+(4) by $M(n-1)$ and obtain
\begin{align*}
 f(T) &\le M(n-3)+M(n-1)+M(n-5)+M(n-5)\\
      & \le 2\beta^{n-5}+\beta^{n-3}+\beta^{n-1}\leq\beta^n \text{ as } \beta \ge 1.6440 \enspace .
\end{align*}

Otherwise, there is some vertex $c \in N^-(b_1)$ such that $b_2 \rightarrow c$. Then, the number of $W$ in leaf (6) of Fig.~\ref{fig:searchtree} is upper bounded by $M(s_{b_2}-2)=M(n-6)$, and by Claims~\ref{cl:2.n-4} and \ref{cl:4.n-3} those in leaves (2) and (4) are upper bounded by $M(n-5)+2M(s_{d_1}-2)$ and $2M(n-5)+M(n-4)+3M(n-7)$, respectively. Thus,
\begin{align*}
 f(T) &\le M(n-3)+(M(n-5)+2M(n-5))+M(n-5)+(2M(n-5)\\
      & \quad +M(n-4)+3M(n-7))+M(n-5)+M(n-6)\\
      & \le 3\beta^{n-7}+\beta^{n-6}+7\beta^{n-5}+\beta^{n-4}+\beta^{n-3}\leq\beta^n \text{ as } \beta \ge 1.6740 \enspace .
\end{align*}

\noindent
\textbf{Case 2.3: $\mathbf{s_{b_1} = n-4, s_{b_2} = n-3}$.}
By Claim~\ref{cl:2.n-3}, at most $M(n-4)$ subsets $W$ correspond to leaf (2) in Fig.~\ref{fig:searchtree}.
If $N^-(b_1)$ induces a directed cycle, Claim~\ref{cl:4.n-4} upper bounds the number of subsets corresponding to leaf (4) by $M(n-7)+2M(n-6)+4M(n-5)$ as at most 2 vertices except $b_2$ and $n$ have score $n-3$ by Lemma~\ref{thm:nummaxscore}. Together with Claim~\ref{cl:1356}, this gives
\begin{align*}
 f(T) &\le M(n-3)+M(n-4)+M(n-4)+(M(n-7)+2M(n-6)\\
      & \quad +4M(n-5))+M(n-6)+M(n-6)\\
      & \le \beta^{n-7}+4\beta^{n-6}+4\beta^{n-5}+2\beta^{n-4}+\beta^{n-3}\leq\beta^n \text{ as } \beta \ge 1.6670 \enspace .
\end{align*}
Otherwise, $c_1 \rightarrow c_3$. If $d_1 \rightarrow b_1$, then Claim~\ref{cl:4.n-4} upper bounds the number of subsets corresponding to leaf (4) by $M(n-3)+M(n-4)+M(n-5)$ or $M(n-5)+6M(n-6)$.
Then,
\begin{align*}
f(T) &\leq \max
\begin{cases}
M(n-3)+M(n-4)+M(n-4)+(M(n-3)\\
  \quad +M(n-4)+M(n-5))+M(n-6)+M(n-6) \\
  \hfill \leq 2\beta^{n-6}+\beta^{n-5}+3\beta^{n-4}+2\beta^{n-3}\leq\beta^n \text{ as } \beta \ge 1.6632,\\
%M(n-3)+M(n-4)+M(n-4)+M(n-5)\\
%\quad +6M(n-6)+M(n-6)+M(n-6)\\
%\quad \leq 8\beta^{n-6}+\beta^{n-5}+2\beta^{n-4}+\beta^{n-3}\leq\beta^n \text{ as } \beta \ge 1.6396 \enspace .\\
M(n-3)+M(n-4)+M(n-4)+(M(n-5)\\
  \quad +6M(n-6))+M(n-6) + M(n-6)\\
  \hfill \leq 8\beta^{n-6}+\beta^{n-5}+2\beta^{n-4}+\beta^{n-3}\leq\beta^n \text{ as } \beta \ge 1.6396 \enspace .\\
\end{cases}
\end{align*}
Otherwise, $b_1 \rightarrow d_1$.
%Then, $b_1$ and $b_2$ belong to $4$ cycles in $T$, one for each vertex in $N^-(b_1)\cup \{d_1\}$.
For the possible $W$ with $b_1,b_2,n \in W$, none of $N^-(b_1)\cup \{d_1\}$ is in $W$ as these vertices all create cycles with $b_1,b_2,n$.
Thus, the number of possible subsets $W$ corresponding to leaf (6) is upper bounded by $M(s_{b_1}-3)=M(n-7)$. Then, by Claims~\ref{cl:1356} and~\ref{cl:4.n-4},
\begin{align*}
 f(T) &\le M(n-3)+M(n-4)+M(n-4)+(M(n-3)+M(n-4)\\
      & \quad +M(n-5)+M(n-7))+M(n-6)+M(n-7)\\
      & \le 2\beta^{n-7}+\beta^{n-6}+\beta^{n-5}+3\beta^{n-4}+2\beta^{n-3}\leq\beta^n \text{ as } \beta \ge 1.6672 \enspace .
\end{align*}

\noindent
\textbf{Case 2.4: $\mathbf{s_{b_1} = n-4,s_{b_2} \leq n-4}$.}
By grouping leaves (2) and (4) into one possibility where $n \notin W$, Claim~\ref{cl:1356} upper bounds the number of such maximal transitive vertex sets by
\begin{align*}
 f(T) &\le M(n-3)+M(n-1)+M(n-5)+M(n-6)+M(n-6)\\
      & \le 2\beta^{n-6}+\beta^{n-5}+\beta^{n-3}+\beta^{n-1}\leq\beta^n \text{ as } \beta \ge 1.6570 \enspace .
\end{align*}

\noindent
\textbf{Case 2.5: $\mathbf{s_{b_1} \le n-5}$.}
By grouping leaves (2) and (4) into one possibility where $n \notin W$, Claim~\ref{cl:1356} upper bounds the number of such maximal transitive vertex sets by
\begin{align*}
 f(T) &\le M(n-3)+M(n-1)+M(n-4)+M(n-7)+M(n-7)\\
      & \le 2\beta^{n-7}+\beta^{n-4}+\beta^{n-3}+\beta^{n-1}\leq\beta^n \text{ as } \beta \ge 1.6679 \enspace .
\end{align*}

\noindent
\textbf{Case 3: $\mathbf{s_n \leq n-4}$.}
We may assume that the score sequence $s=s(T)$ satisfies
\begin{equation}
\label{eqn:sbound4}
3\leq s_1\leq \hdots\leq s_n\leq n-4.
\end{equation}
Let $S_n$ be the set of all score sequences that are feasible for \eqref{eqn:sbound2}--\eqref{eqn:sbound4}.
The set $S_n$ serves as domain of the linear map $G:S_n\rightarrow\mathbb R_+,s\mapsto\sum_{v=1}^ng(s_v)$
with the strictly convex terms $g:c\mapsto\beta^c$.
Furthermore, for all $n\ge 11$, we define a special score sequence $\sigma(n)$, whose membership in $S_n$ is easy to verify:
%For $n=11$ set $\sigma(n)=(3,3,3,3,3,5,7,7,7,7,7)$, for $n = 12$ set
%$\sigma(n)=(3,3,3,3,3,3,8,8,8,8,8,8)$, for $n=13$ set $\sigma(n)=(3,3,3,3,3,3,6,9,9,9,9,9,9)$, and for $n \ge 14$, set
%$\sigma(n)=(3,3,3,3,3,3,4,7,8,\hdots,n-9,n-8,n-5,n-4,n-4,n-4,n-4,n-4,n-4)$.
\begin{align*}
\sigma(n):=
\begin{cases}
(3,3,3,3,3,5,7,7,7,7,7) & \text{if } n = 11\enspace ,\\
(3,3,3,3,3,3,8,8,8,8,8,8) & \text{if } n=12\enspace ,\\
(3,3,3,3,3,3,6,9,9,9,9,9,9) & \text{if } n = 13\enspace , \text{ and}\\
(3,3,3,3,3,3,4,7,8,\hdots,n-9,n-8,n-5,\\ \quad \quad ~n-4,n-4,n-4,n-4,n-4,n-4) & \text{if } n \ge 14\enspace .
\end{cases}
\end{align*}

\begin{lemma}
\label{thm:reclemma}
For $n\geq 11$, the sequence $\sigma(n)$ maximizes the value of $G$ over all sequences in $S_n$:
$G(s)\leq G(\sigma(n))$ for all $s\in S_n.$
\end{lemma}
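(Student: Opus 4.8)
The plan is to maximize the linear functional $G(s)=\sum_{v=1}^{n}\beta^{s_v}$ over the polytope-like set $S_n$ of integer score sequences satisfying \eqref{eqn:sbound2}--\eqref{eqn:sbound4}, and to show that the listed sequence $\sigma(n)$ is the maximizer. Since each term $g(c)=\beta^c$ is strictly convex, the intuition is that $G$ is maximized at an ``extreme'' sequence, one that is as spread out as possible: many entries pushed down to the lower bound $3$ and the remaining entries pushed up to the upper bound $n-4$, subject only to the Landau constraints and monotonicity. So first I would record the exchange/shifting inequality: for a strictly convex $g$ and integers $a\le b$ with $b-a\ge 2$, one has $g(a-1)+g(b+1) > g(a)+g(b)$; hence whenever a sequence in $S_n$ has two entries that are not both at an extreme value and can be spread apart by one unit while staying in $S_n$, the value of $G$ strictly increases. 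The proof then becomes: starting from an arbitrary $s\in S_n$, repeatedly apply such spreading moves; the value of $G$ strictly increases, the process terminates (e.g. because $\sum_v s_v$ is fixed at $\binom n2$ by \eqref{eqn:sbound3} while $\sum_v s_v^2$ strictly increases and is bounded), and I must argue the unique sequence from which no further valid spreading move exists is exactly $\sigma(n)$.

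The main technical work is the last step: characterizing the ``locally unspreadable'' sequences in $S_n$. Here I would argue as follows. Consider a maximizer $s$. The small end: I claim $s$ must have exactly $t$ entries equal to $3$ where $t$ is forced by the tightness of the first nontrivial Landau inequality; if fewer entries were at the floor, one could lower some entry (raising another at the top), unless raising is blocked — and raising at the top is blocked only when the top is already saturated at $n-4$. So symmetrically (using the arc-reversal symmetry $s_v\mapsto n-1-s_v$ noted just before the lemma, which maps $S_n$-type constraints appropriately), the large end of $s$ consists of a block of entries equal to $n-4$. The number of floor-$3$ entries is pinned down by \eqref{eqn:sbound2} for the relevant small $k$: with $j$ entries equal to $3$, the constraint $\sum_{v=1}^{k}s_v\ge\binom k2+1$ for $k=j$ and $k=j+1$ forces $j$ to be $5$ or $6$ (for small $n$) respectively $6$ (for $n\ge 14$), matching $\sigma(n)$. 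The number of top entries equal to $n-4$ is similarly pinned by \eqref{eqn:sbound3} together with \eqref{eqn:sbound2} applied near $k=n-1$: the total $\binom n2$ must be achieved, and pushing one more entry to $n-4$ would violate \eqref{eqn:sbound2} for some $k$, while pushing one fewer would leave room to spread further. The few ``middle'' entries (the values $5,7$ for $n=11$; $8$'s only for $n=12$; the $6,9$'s for $n=13$; and the interpolating run $4,7,8,\dots,n-9,n-8,n-5$ for $n\ge 14$) are then the unique filler making $\sum_v s_v=\binom n2$ with all partial sums meeting \eqref{eqn:sbound2}; any perturbation of a middle entry either breaks a Landau inequality or admits a spreading move.

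For $n=11,12,13$ I would just verify directly (it is a tiny finite check) that $\sigma(n)\in S_n$ and that no spreading move applies, and that any competitor is dominated; the uniform family $n\ge 14$ is handled by the structural argument above, checking that the specified middle run is exactly the minimal staircase forced by \eqref{eqn:sbound2} between the block of $3$'s and the block of $(n-4)$'s. The hardest part, and where I would spend the most care, is proving that the middle segment of $\sigma(n)$ for $n\ge 14$ is genuinely locally optimal — i.e. that one cannot, say, trade $4\mapsto 3$ against $n-5\mapsto n-4$, or reshuffle two adjacent middle values to spread them — each such move must be shown to violate one of \eqref{eqn:sbound2}, \eqref{eqn:sbound3}. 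This is a finite but fiddly casework on which partial-sum constraints are tight; I expect it to be the main obstacle and the place where an explicit, somewhat tedious verification is unavoidable.
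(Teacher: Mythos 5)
Your overall strategy --- use strict convexity of $g(c)=\beta^c$ to argue that a maximizer of $G$ has all non-extreme scores pairwise distinct and the extremes $3$ and $n-4$ saturated, then invoke the Landau inequalities to pin down the multiplicities and the middle values --- is exactly the strategy of the paper, so the plan is sound in outline. But there is one genuine gap. Everything rests on your assertion that two equal entries $s_u=s_v=c$ with $3<c<n-4$ ``can be spread apart by one unit while staying in $S_n$''; you never establish this, and it is not automatic: the move $s_u\mapsto c-1$, $s_v\mapsto c+1$ decreases every partial sum $\sum_{w=1}^{k}s_w$ with $u\leq k<v$ by one, so it leaves $S_n$ whenever some inequality \eqref{eqn:sbound2} is tight at such a $k$. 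The paper's Claim~\ref{thm:claim1} devotes its first half precisely to excluding this: if $\sum_{w=1}^{k}s_w=\binom{k}{2}+1$ for some $k$ with $s_k=s_{k+1}=c$, then $s_{k+1}=s_k\leq\binom{k}{2}+1-\binom{k-1}{2}-1=k-1$, whence $\sum_{w=1}^{k+1}s_w\leq\binom{k+1}{2}$, violating \eqref{eqn:sbound2} at $k+1$. Without this step a ``locally unspreadable'' sequence could still carry repeated middle scores, and your characterization of the maximizer collapses.

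Two further points are mislocated or overstated. First, you place the main burden on showing that $\sigma(n)$ is locally optimal, but the logic runs the other way: one fixes an arbitrary maximizer $s$ (which exists because $S_n$ is finite) and shows that every $s\neq\sigma(n)$ admits an improving move; $\sigma(n)$ itself needs only membership in $S_n$. Second, for $n\geq 14$ the middle segment is not ``forced by \eqref{eqn:sbound2}'' alone: there are $n-12$ middle entries drawn from the $n-8$ values $4,\hdots,n-5$, so exactly four values are omitted; \eqref{eqn:sbound3} forces the omissions into pairs $\{h,n-1-h\}$, the constraint \eqref{eqn:sbound2} at $k=8$ forces $h<7$, and only then does maximization (convexity again) select $h_1=5$, $h_2=6$. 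Likewise the count of exactly six scores equal to $3$ (for $n\geq 12$) does not follow from \eqref{eqn:sbound2} at $k=j,j+1$, which only yields $j\leq 6$; it needs another convexity exchange, $s_a\mapsto s_a-1$ and $s_{a+1}\mapsto s_{a+1}+1$ for a suitable $a\leq 6$, raising the \emph{next distinct middle score} rather than an entry at the top, which may already be saturated at $n-4$.
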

Once Lemma~\ref{thm:reclemma} is proved we can bound $f(T)$, for $s=s(T)\in S_n$, from above via
%\begin{multline}
%\label{eqn:uppernlarge}
%f(T)\leq G(s)\leq G(\sigma(n))\\
%=\begin{cases}
%5\beta^3+\beta^5+5\beta^7,&\text{if } n=11\enspace ,\\
%6\beta^3+6\beta^8,&\text{if } n=12\enspace ,\\
%6\beta^3+\beta^6+6\beta^9,&\text{if } n=13\enspace ,\\
%6\beta^3+\beta^4+\frac{\beta^{n-7}-\beta^7}{\beta-1}+\beta^{n-5}+6\beta^{n-4}\\
% \quad\quad \leq \frac{\beta^{n-7}}{\beta-1}+\beta^{n-5}+6\beta^{n-4}, &\text{if } n\geq 14 \enspace,
%\end{cases}
%\end{multline}
\begin{align}
\label{eqn:uppernlarge}
f(T)\leq G(s)\leq G(\sigma(n))
=\begin{cases}
5\beta^3+\beta^5+5\beta^7,&\text{if } n=11\enspace ,\\
6\beta^3+6\beta^8,&\text{if } n=12\enspace ,\\
6\beta^3+\beta^6+6\beta^9,&\text{if } n=13\enspace ,\\
6\beta^3+\beta^4+\frac{\beta^{n-7}-\beta^7}{\beta-1}+\beta^{n-5}+6\beta^{n-4}\\ \quad\quad
 \leq \frac{\beta^{n-7}}{\beta-1}+\beta^{n-5}+6\beta^{n-4}, &\text{if } n\geq 14 \enspace,
\end{cases}
\end{align}
which is at most $\beta^n$ as $\beta \ge 1.6259$.
To prove Lemma~\ref{thm:reclemma}, we choose any sequence $s\in\mbox{argmax}_{s'\in S_n} G(s')$ and then show that $s=\sigma(n)$.
Recall that $s_1\geq 3$ and $s_n\leq n-4$, and set $s_1^*=3,s_n^*=n-4$.

\begin{boldclaim}
\label{thm:claim1}
If some score $c$ appears more than once in $s$, then $c\in\{s_1^*,s_n^*\}$.
\end{boldclaim}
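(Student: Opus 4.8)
The plan is to prove Claim~\ref{thm:claim1} by a standard \emph{smoothing} (exchange) argument, exploiting the strict convexity of $g:c\mapsto\beta^c$. Suppose for contradiction that some value $c\notin\{s_1^*,s_n^*\}=\{3,n-4\}$ occurs at least twice in the chosen maximizer $s$; say $s_i=s_{i+1}=c$ for two consecutive indices (consecutive because $s$ is non-decreasing). The idea is to replace the pair $(s_i,s_{i+1})=(c,c)$ by the spread-out pair $(c-1,c+1)$, keeping the total sum $\sum_{v} s_v=\binom{n}{2}$ unchanged (so \eqref{eqn:sbound3} still holds) and moving in the direction that increases $G$: since $g$ is strictly convex, $g(c-1)+g(c+1)>2g(c)$, so the new sequence $s'$ has $G(s')>G(s)$, contradicting maximality once we check $s'\in S_n$.

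The work is therefore entirely in verifying feasibility of the perturbed sequence, i.e. that $s'$ still obeys \eqref{eqn:sbound2} and the range constraint \eqref{eqn:sbound4}. For the range constraint: decreasing one entry from $c$ to $c-1$ is safe because $c>s_1^*=3$ forces $c-1\ge 3$; increasing one entry from $c$ to $c+1$ is safe because $c<s_n^*=n-4$ forces $c+1\le n-4$. For the Landau prefix-sum inequalities \eqref{eqn:sbound2}: after re-sorting, $s'$ differs from $s$ only in that some prefix sums have changed. Decreasing an entry could only hurt a prefix sum, but we compensate by choosing \emph{which} occurrences of $c$ to perturb: let $c$ occupy the index range $[p,q]$ in $s$ with $q\ge p+1$; form $s'$ by setting the entry at position $p$ to $c-1$ and the entry at position $q$ to $c+1$ (and re-sorting, which just moves the $c-1$ to the front of that block and the $c+1$ to the back). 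Then for every $k$, the prefix sum $\sum_{v\le k}s'_v$ equals $\sum_{v\le k}s_v$ except possibly a drop of $1$ for $k\in\{p,\dots,q-1\}$; but on that range $\sum_{v\le k}s_v\ge \binom{k}{2}+1$ and in fact one checks the slack is at least $1$ there — because the very next term $s_{k+1}=c$ together with Landau at $k+1$, or more directly because a block of equal scores of length $\ge 2$ cannot sit exactly at the Landau boundary (if $\sum_{v\le k}s_v=\binom{k}{2}+1$ then $T$ would decompose, contradicting strongness, or one invokes \eqref{eqn:sbound2} at $k+1$ to get $s_{k+1}\ge k$ which combined with $s_{k+1}=c=s_k$ and \eqref{eqn:sbound2} at $k-1$ gives the needed strict inequality). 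So $\sum_{v\le k}s'_v\ge\binom{k}{2}+1$ for all $k\le n-1$, and $s'\in S_n$.

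The main obstacle is this last point: showing that a prefix sum lying at a position strictly inside a block of repeated scores has Landau slack $\ge 1$, so that decrementing one entry of the block does not violate \eqref{eqn:sbound2}. The cleanest route is: if $\sum_{v=1}^{k}s_v=\binom{k}{2}+1$ for some $k\le n-1$, then the first $k$ vertices of $T$ form a strong component that beats all of the remaining $n-k$, contradicting the assumption that $T$ is strong — hence \emph{every} proper prefix sum is $\ge\binom{k}{2}+2$, giving slack $\ge 1$ at every $k\in\{1,\dots,n-1\}$, which makes the decrement harmless regardless of where it lands. With feasibility of $s'$ established and $G(s')>G(s)$, the contradiction is complete, so no value other than $3$ or $n-4$ can repeat in $s$.
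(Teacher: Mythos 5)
Your overall strategy is the same as the paper's: a smoothing/exchange argument using strict convexity of $g$, with all the real work in checking that the perturbed sequence $s'$ still satisfies the Landau inequalities \eqref{eqn:sbound2}. You correctly identify that the only danger is a prefix sum sitting exactly at the Landau boundary inside the block of repeated scores, and one of the two justifications you offer for ruling this out --- the parenthetical one, deriving $s_{k+1}\ge k$ from \eqref{eqn:sbound2} at $k+1$ and $s_k\le k-1$ from \eqref{eqn:sbound2} at $k-1$, which together with $s_k=s_{k+1}=c$ is absurd --- is exactly the paper's argument and is correct (modulo the edge cases $k=1$ and $k+1=n$, which the paper disposes of by first pinning $k$ into the range $8\le k\le n-9$).

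However, the route you explicitly commit to in your final paragraph is wrong. You claim that $\sum_{v=1}^{k}s_v=\binom{k}{2}+1$ for some $k\le n-1$ would contradict strongness of $T$, and conclude that every proper prefix sum has slack at least $1$, so the decrement is ``harmless regardless of where it lands.'' This fails on two counts. First, non-strongness corresponds to a prefix sum equal to $\binom{k}{2}$, not $\binom{k}{2}+1$; the condition \eqref{eqn:sbound2} defining $S_n$ is precisely $\sum_{v=1}^k s_v\ge\binom{k}{2}+1$, and equality there is perfectly compatible with strongness (already $C_3$ has $s_1=1=\binom{1}{2}+1$, and in the regime at hand a sequence with $\sum_{v=1}^{7}s_v=\binom{7}{2}+1=22$ and all entries at least $3$ is realizable by a strong tournament). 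Second, $S_n$ is a set of \emph{sequences}, not of tournaments, so an argument about $T$ decomposing cannot by itself certify that the perturbed sequence remains in $S_n$. The fix is simply to promote your parenthetical argument to the main one: equality at $k$ forces $s_k\le k-1$ and $s_{k+1}\ge k$ simultaneously, which is impossible when $s_k=s_{k+1}$; hence no $k$ strictly inside the block attains equality, the relevant prefix sums all have slack at least $1$, and the decrement preserves \eqref{eqn:sbound2}. That is precisely how the paper proceeds.
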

\begin{proof}%[of Claim~\ref{thm:claim1}]
For contradiction, suppose that $s_1^* < s_u=s_v=c<s_n^*$ for two vertices $u$ and $v$ such that $1\leq u<v\leq n$.
First, suppose there exists an integer $k\in\{u,\hdots,v-1\}$ satisfying \eqref{eqn:sbound2} with equality:
\begin{equation}
\label{eqn:sboundequality}
\sum_{v=1}^ks_v=\binom{k}{2}+1 \enspace .
\end{equation}
Then \eqref{eqn:sbound2}, \eqref{eqn:sbound3} and Lemma~\ref{thm:nummaxscore} imply $8\leq k\leq n-9$, so $k\notin \{s_1^*,s_n^*\}$.
The choice of $k$ among vertices of equal score $c$ now yields
\begin{equation}
\label{eqn:sboundequalscore}
s_{k+1}=s_k=\sum_{v=1}^ks_v-\sum_{v=1}^{k-1}s_v\leq\binom{k}{2}+1-\binom{k-1}{2}-1=k-1 \enspace .
\end{equation}
This however contradicts \eqref{eqn:sbound2}:
\begin{equation*}
\sum_{v=1}^{k+1}s_v\leq\binom{k}{2}+1+(k-1)=\binom{k+1}{2} \enspace .
\end{equation*}
It is thus asserted that no vertex $k$ with property \eqref{eqn:sboundequality} exists.
The score sequence $s'$ differing from $s$ only in $s_u'=s_u-1=c-1,s_v'=s_v+1=c+1$, therefore belongs to $S_n$.
So apply the function $G$ to it, and use the strict convexity of $g$:
\begin{equation*}
G(s')-G(s)=(g(c+1)-g(c))-(g(c)-g(c-1))>0 \enspace .
\end{equation*}
This contradicts the choice of $s$ as a maximizer of $G$, and establishes Claim \ref{thm:claim1}.
\end{proof}

\begin{boldclaim}\label{thm:claim2}
The values $s_1^*=3$ and $s_n^*=n-4$ each appear between two and six times as scores in the sequence $s$.
\end{boldclaim}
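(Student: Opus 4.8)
The plan is to combine the structural information already extracted — Claim \ref{thm:claim1} (no value strictly between $s_1^*$ and $s_n^*$ repeats), Lemma \ref{thm:nummaxscore} (at most six vertices of score $\geq n-4$, hence at most six of score exactly $n-4 = s_n^*$), and its symmetric counterpart via arc reversal (at most six vertices of score $\leq 3$, hence at most six of score exactly $3 = s_1^*$) — with a counting argument against the Landau equality \eqref{eqn:sbound2}. The upper bound of six in both directions is immediate: if more than six vertices had score $s_n^*=n-4$ we would violate the symmetric form of Lemma \ref{thm:nummaxscore} (reverse all arcs; score $n-4$ becomes score $3$, and the case $k=2$ of that lemma bounds the count of small-score vertices — one must be slightly careful about the exact threshold, but the statement "at most six vertices of score at least $n-4$" follows by taking $k=2$, giving $2(k+1)=6$), and similarly for $s_1^*$. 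So the real content is the lower bound of two, i.e. ruling out that $s_1^*$ or $s_n^*$ appears zero or exactly one time.

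The first step is to dispose of the "zero or one occurrence" possibilities. Suppose $s_1^*=3$ appears at most once. If it appears zero times, then $s_1 \geq 4$, so $\sum_{v=1}^{k} s_v \geq 4k$ for every $k$; but the Landau inequality \eqref{eqn:sbound2} for $k=1$ is only $s_1 \geq 1$, so this isn't an immediate contradiction — instead I would look at the constraint more carefully using \eqref{eqn:sbound3}, $\sum s_v = \binom{n}{2}$, together with the hypothesis $s_n \leq n-4$ of Case 3 and Claim \ref{thm:claim1}. If neither extreme value repeats, then all $n$ scores lie in $\{3,4,\dots,n-4\}$ with each of the $n-6$ "interior" values used at most once by Claim \ref{thm:claim1} and each of $3,n-4$ used at most once by assumption, giving at most $n-4$ distinct usable slots for $n$ scores — a contradiction already for the count alone once $n \geq 11$. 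More precisely: Claim \ref{thm:claim1} forces the multiset $s$ to consist of the values $3$ and $n-4$ with some multiplicities $a,b \geq 0$, plus a subset of $\{4,5,\dots,n-5\}$ each with multiplicity exactly one; so $a+b+|\text{(interior values used)}| = n$ while $|\text{(interior values used)}| \leq n-8$, forcing $a+b \geq 8$. Hence at least one of $a,b$ is $\geq 4 \geq 2$; the task is to show *both* are $\geq 2$.

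The second and main step is exactly this symmetry-balancing. Having $a+b \geq 8$, suppose for contradiction that $b \leq 1$ (the case $a \leq 1$ is symmetric by arc reversal, which swaps the roles of $s_1^*$ and $s_n^*$ and of $a$ and $b$). Then $a \geq 7$, i.e. at least seven vertices have score exactly $3$. Now I invoke Landau \eqref{eqn:sbound2} with $k=7$: $\sum_{v=1}^{7} s_v = 21 = \binom{7}{2}$, which *violates* \eqref{eqn:sbound2} since it requires $\binom{7}{2}+1 = 22$. (If $a$ is even larger the violation is only worse, and if $a=7$ exactly it is already fatal.) This is the crux, and it's clean. The only thing needing care is the boundary: the argument as stated needs $a \geq 7$, which we get from $b \leq 1$ and $a+b \geq 8$; and we separately need the symmetric reversed-tournament version of Lemma \ref{thm:nummaxscore} to be genuinely available — the excerpt states Lemma \ref{thm:nummaxscore} for scores "at least $n-2-k$" and remarks immediately afterward (the paragraph beginning "Third, let $T'$ be a tournament obtained from $T$ by reversing all arcs") that this is symmetric to bounding the number of small-score vertices, so I may quote it. Thus both multiplicities lie in $\{2,3,4,5,6\}$, which is the claim. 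I expect the main obstacle to be purely bookkeeping: making sure the "$\leq 6$" and "$\geq 2$" thresholds match up exactly with what Lemma \ref{thm:nummaxscore} and its reversal give for the specific value $n-4 = n-2-2$ (the $k=2$ case), and that the $n \geq 11$ hypothesis of Case 3 is actually used where claimed (it guarantees $n-8 \geq 3$, so the interior range $\{4,\dots,n-5\}$ is what I said, and it makes $8 \leq a+b$ compatible with $a,b \leq 6$).
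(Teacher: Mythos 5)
Your proposal is correct and follows essentially the same route as the paper: the upper bound of six comes from Lemma~\ref{thm:nummaxscore} (case $k=2$) and its arc-reversal symmetric form, and Claim~\ref{thm:claim1} forces $s_1^*$ and $s_n^*$ to account for at least $n-(n-8)=8$ of the scores, so each appears at least $8-6=2$ times. Your closing detour through Landau with $k=7$ is sound but redundant, since it merely re-derives the already-established bound that $s_1^*$ appears at most six times.
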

\begin{proof}
By Lemma~\ref{thm:nummaxscore}, $s_n^*$ is the score of no more than $6$ vertices. By symmetry, $s_1^*$ is the score of no more than $6$ vertices.
As a consequence of Claim \ref{thm:claim1}, together $s_1^*$ and $s_n^*$ appear at least eight times in $s$.
Hence there are at least two vertices of score $s_1^*$ and at least two vertices of score $s_n^*$.
\end{proof}

\begin{boldclaim}\label{thm:claim3}
If $n \ge 12$, each of $s_1^*$ and $s_n^*$ is the score of exactly six of the vertices.
\end{boldclaim}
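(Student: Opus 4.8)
\noindent
The plan is to exploit that $s$ maximizes the strictly convex functional $G$ over $S_n$. By Claims~\ref{thm:claim1} and~\ref{thm:claim2}, $s$ consists of $a$ copies of $s_1^*=3$, then $m:=n-a-b$ pairwise distinct ``middle'' scores $v_1<\cdots<v_m$ lying in $\{4,\dots,n-5\}$, then $b$ copies of $s_n^*=n-4$, with $2\le a,b\le 6$; counting the middle scores (as in the proof of Claim~\ref{thm:claim2}) already gives $a+b\ge 8$. Since $a,b\le 6$, it suffices to prove $a+b\ge 12$. So I assume $a+b\le 11$, whence $m\ge 1$, and derive a contradiction with maximality.

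\noindent
The workhorse is a simple exchange: when $m\ge 2$, replace $(v_1,v_m)$ by $(v_1-1,v_m+1)$. The resulting sequence $s'$ is still non-decreasing, still satisfies~\eqref{eqn:sbound4} (as $v_1-1\ge 3$ and $v_m+1\le n-4$) and~\eqref{eqn:sbound3}, and by strict convexity of $c\mapsto\beta^c$ and $v_m>v_1-1$ it satisfies $G(s')>G(s)$; so $s'\in S_n$ would contradict maximality. The Landau inequalities~\eqref{eqn:sbound2} for $s'$ coincide with those for $s$ except that the $k$-th partial sum drops by one for $a+1\le k\le n-b-1$, so the only obstruction is a $k$ in that range with $\sum_{v=1}^{k}s_v=\binom{k}{2}+1$. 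Two easy situations are disposed of immediately: if $m\le 1$, then either $a+b=n\ge 12$ (contradiction) or $m=1$, which forces $n=12$ and $a+b=11$, and then~\eqref{eqn:sbound3} makes the unique middle score $v_1$ equal $3$ or $n-4$, which is impossible; and if $n\le 13$ no obstructing $k$ exists at all, since $\sum_{v=1}^{k}s_v=\binom{k}{2}+1$ together with $s_v\ge 3$ forces $k\ge 7$ and, symmetrically via the reversed tournament, $k\le n-7$.

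\noindent
So assume $n\ge 14$ and that some obstructing $k$ exists. Subtracting the $(k-1)$-st and $(k+1)$-st cases of~\eqref{eqn:sbound2} from $\sum_{v=1}^{k}s_v=\binom{k}{2}+1$ gives $s_k\le k-1\le s_{k+1}-1$, so $s_1,\dots,s_k$ are the $a$ threes together with $k-a$ distinct elements of $\{4,\dots,k-1\}$; hence $a\ge 4$, and equating the partial sum with $\binom{k}{2}+1$ determines exactly which elements of $\{4,\dots,k-1\}$ are omitted --- this excludes $a=4$, forces the single omitted value $8$ when $a=5$, and forces an omitted pair summing to $11$ when $a=6$. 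In each surviving configuration the whole bottom block $s_1,\dots,s_k$ is rigid, and I derive the contradiction by a further, explicitly chosen exchange that straddles the tight indices (so that every Landau inequality for the modified sequence still holds): for $a=5$, lower $v_1=4$ to $3$ and raise $v_4=7$ to $8$ (the affected partial sums below index $9$ keep positive slack, those at indices $\ge 9$ are unchanged, and $G$ strictly increases); for $a=6$, a short exchange first forces the omitted pair to be $\{5,6\}$ (not $\{4,7\}$), after which propagating the rigidity up the chain of tight indices leaves only the sequence $\sigma(n)$ itself (with $b=6$) or one completely determined sequence with $b=5$, the latter being ruled out by the direct estimate $G(\cdot)<G(\sigma(n))$. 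In every case $a+b\le 11$ is contradicted.

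\noindent
The main obstacle is the last paragraph: correctly reading off the rigidity of the bottom block from a single tight Landau inequality, and then, in the $a=6$ subcase, propagating that rigidity and locating an improving exchange (or a direct comparison with $G(\sigma(n))$) while keeping the modified sequence inside $S_n$ --- the Landau check near the boundary between the $s_1^*$-block, the middle, and the $s_n^*$-block being the delicate point.
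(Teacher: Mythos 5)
Your overall strategy --- perturb the score sequence so as to increase the strictly convex functional $G$ while staying inside $S_n$ --- is the right one, and several pieces are correct: the reduction to $a+b\ge 12$, the disposal of $m\le 1$, the observation that a tight partial sum forces $7\le k\le n-7$ (so $n\le 13$ is immediate), the rigidity of the bottom block at a tight index (giving $a\ge 4$, excluding $a=4$, and pinning the omitted values $8$ for $a=5$ and a pair summing to $11$ for $a=6$), and the explicit exchange for $a=5$. The genuine gap is the $a=6$ endgame, which you yourself call ``the main obstacle'': the ``short exchange'' eliminating the omitted pair $\{4,7\}$ is never exhibited (one does exist --- replace $(s_7,s_8)=(5,6)$ by $(4,7)$, which lowers only the partial sum at $k=7$, where the slack is $1$), and the ``propagation up the chain of tight indices'' is only asserted. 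When one actually carries it out (take $k$ to be the \emph{largest} tight index, note that $s_1,\dots,s_k=(3^6,4,7,8,\dots,k-1)$ and that every index in $[8,k]$ is then tight, and apply your original exchange to the middle scores strictly above position $k$, where no partial sum is tight), the outcome is also not what you describe: zero surviving middle scores above $k$ is arithmetically impossible (it forces $b^2-7b+2=0$), and exactly one survivor forces $b=6$ and $s=\sigma(n)$, because for $b\le 5$ the value of that survivor determined by \eqref{eqn:sbound3} is at least $n-2>n-5$. There is no ``completely determined sequence with $b=5$'' left to rule out by comparison with $G(\sigma(n))$. So your skeleton can be completed, but as written the hardest case is a plan, not a proof, and its announced conclusion is partly wrong.

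All of this difficulty comes from your choice of exchange: decrementing the \emph{smallest middle} score touches every partial sum from position $a+1$ up to $n-b-1$, and those are exactly the places where Landau can be tight. The paper's proof perturbs instead the adjacent entries at positions $a=t+1\in\{3,\dots,6\}$ and $a+1$, where $t\le 5$ is the supposed multiplicity of $s_1^*$: replacing $(s_a,s_{a+1})$ by $(s_a-1,s_{a+1}+1)$ changes only the single partial sum at $k=a$, and since $s_1=\dots=s_{a-1}=3$ and $s_a\ge 4$ one has $\sum_{v=1}^{a}s_v\ge 3a+1\ge\binom{a}{2}+2$ for $a\le 6$, so \eqref{eqn:sbound2} can never become an obstruction; Claim~\ref{thm:claim1} and Lemma~\ref{thm:nummaxscore} supply $s_a<s_{a+1}<s_{a+2}$ and $s_{a+1}<n-4$, so monotonicity and the cap are preserved as well, and symmetry handles $s_n^*$. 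That one choice of where to perturb eliminates your entire tight-index analysis; I recommend adopting it.
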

\begin{proof}
Assuming this were not the case for $s_1^*$, by Claim \ref{thm:claim2} it would be the score of two to five vertices.
Hence there exists a vertex $a\in\{3,\hdots,6\}$ with score $s_a>s_1^*$.
It holds $s_n^*=n-4>s_a+1$, which is obvious if $n \ge 13$ and follows from \eqref{eqn:sbound3} if $n=12$.
So there must be two scores in $s$ larger than $s_a$, precisely $s_a<s_{a+1}<s_{a+2}$.
Observe that the sequence $s'=(s_1,\hdots,s_{a-1},s_a-1,s_{a+1}+1,s_{a+2},\hdots,s_n)$ is a member of $S_n$.
The same argument on strict convexity of $g$ as in Claim \ref{thm:claim1} gives
\begin{equation*}
G(s')-G(s) = (g(y+1)-g(y))-(g(x)-(g(x-1))>0
\end{equation*}
for $x=s_a<s_{a+1}=y$, again contradicting the choice of $s$ as a maximizer of $G$.
Consequently, the sequence $s$ starts with six scores $s_1^*$.
By symmetry, the same argumentation also applies for $s_n^*$, proving the claim.
\end{proof}

\begin{boldclaim}\label{thm:claim4}
If $n = 11$, each of $s_1^*$ and $s_n^*$ is the score of exactly five of the vertices.
\end{boldclaim}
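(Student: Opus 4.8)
The plan is to follow the proof of Claim~\ref{thm:claim3}, but to exploit the smallness of $n=11$ by passing directly to an explicit list of candidate score sequences. Throughout, $s$ is the fixed maximizer of $G$ over $S_{11}$, and $3\le s_1\le\dots\le s_{11}\le n-4=7$ by~\eqref{eqn:sbound4}. By Claim~\ref{thm:claim1} the only scores that may occur more than once in $s$ are $s_1^*=3$ and $s_n^*=7$, and by Claim~\ref{thm:claim2} each of them occurs between two and six times. Let $a$ be the number of vertices of score $3$ and $b$ the number of vertices of score $7$; the remaining $11-a-b$ scores then form a set of pairwise distinct elements of $\{4,5,6\}$, so $11-a-b\le 3$, i.e.\ $a+b\ge 8$.

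Next I would use the sum condition~\eqref{eqn:sbound3}, $\sum_{v=1}^{11}s_v=\binom{11}{2}=55$, to cut the candidates down to two. The left side equals $3a+7b$ plus the sum of the (at most three, pairwise distinct) middle scores, and since distinct elements of $\{4,5,6\}$ have a very constrained sum once their number is fixed, a short check over all pairs $(a,b)$ with $2\le a,b\le 6$ and $a+b\ge 8$ shows that only two of them are consistent with this equation: either $a=b=4$, which forces the middle scores to be exactly $4,5,6$ and hence $s=(3,3,3,3,4,5,6,7,7,7,7)$, or $a=b=5$, which forces the single middle score to be $5$ and hence $s=\sigma(11)=(3,3,3,3,3,5,7,7,7,7,7)$. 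In particular this already excludes $a=6$ and $b=6$.

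It remains to rule out $s=(3,3,3,3,4,5,6,7,7,7,7)$. Arguing exactly as in Claim~\ref{thm:claim1}, consider the sequence $s'=\sigma(11)$ obtained from $s$ by lowering the unique score-$4$ entry to $3$ and raising the unique score-$6$ entry to $7$; it lies in $S_{11}$, as already noted for $\sigma(11)$ in the definition preceding Lemma~\ref{thm:reclemma}. By strict convexity of $g:c\mapsto\beta^c$,
\[
G(s')-G(s)=\bigl(g(7)-g(6)\bigr)-\bigl(g(4)-g(3)\bigr)>0,
\]
contradicting the choice of $s$ as a maximizer of $G$. Hence $s=\sigma(11)$, so in particular each of $s_1^*=3$ and $s_n^*=7$ is the score of exactly five of the vertices, which is the claim. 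The only mildly delicate point is the arithmetic bookkeeping in the middle step, making sure that every admissible pair $(a,b)$ has been examined; everything else is a verbatim reuse of the convexity argument of Claims~\ref{thm:claim1} and~\ref{thm:claim3}.
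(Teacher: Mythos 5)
Your proof is correct, and I verified the finite check at its core: with $a$ vertices of score $3$ and $b$ of score $7$, Claim~\ref{thm:claim1} forces the remaining $11-a-b$ scores to be distinct elements of $\{4,5,6\}$, and running $a+b$ through $8,9,10,11$ against $\sum_v s_v=55$ leaves only $(3,3,3,3,4,5,6,7,7,7,7)$ and $\sigma(11)$ (for $a+b=9$ one gets $4b\in\{17,18,19\}$ and for $a+b=11$ one gets $4b=22$, neither solvable). Your route differs from the paper's in its bookkeeping: the paper first uses \eqref{eqn:sbound3} to cap each multiplicity at five, then assumes fewer than five vertices of score $s_1^*$ and runs the same local exchange $s_a\mapsto s_a-1$, $s_{a+1}\mapsto s_{a+1}+1$ as in Claim~\ref{thm:claim3}, invoking symmetry for $s_n^*$; you instead enumerate all admissible multiplicity pairs down to two explicit candidates and kill the spurious one with a single convexity comparison against $\sigma(11)$. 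Your version has two advantages: the perturbed sequence is $\sigma(11)$ itself, whose membership in $S_{11}$ is already established, so you avoid re-verifying feasibility of the exchanged sequence; and you obtain $s=\sigma(11)$ outright, which subsumes the $n=11$ case of Claim~\ref{thm:claim5}. The paper's version is shorter and uniform with Claim~\ref{thm:claim3}, but proves only the multiplicity statement. The one thing to make airtight in a final write-up is the displayed case analysis over $(a,b)$, since the whole argument rests on that enumeration being exhaustive.
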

\begin{proof}
As all scores are between $3$ and $7$, at most $5$ vertices have score $3$ and at most $5$ vertices have score $7$ by \eqref{eqn:sbound3}. Assume less than $5$ vertices have score $s_1^*$. By Claim \ref{thm:claim2}, $s_1^*$ is the score of two to four vertices.
Hence there exists a vertex $a\in\{3,4,5\}$ with score $s_a>s_1^*$. Thus,
$s_n^*=7>a+1$.
So there must be two scores in $s$ larger than $s_a$, precisely $s_a<s_{a+1}<s_{a+2}$.
To conclude we construct a sequence $s'$ with $G(s')>G(s)$ exactly as in the proof of Claim~\ref{thm:claim3}.
\end{proof}

\begin{boldclaim}
It holds $s=\sigma(n)$.
\label{thm:claim5}
\end{boldclaim}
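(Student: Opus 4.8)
The plan is to derive $s=\sigma(n)$ from the fact that $\sigma(n)$ \emph{majorizes} every score sequence in $S_n$. Concretely, I would first prove the elementary inequality that for every $t\in S_n$ and every $k\in\{1,\dots,n\}$,
\[
\sum_{v=1}^{k}\sigma(n)_v\ \le\ \sum_{v=1}^{k}t_v ,
\]
with equality at $k=n$ forced by \eqref{eqn:sbound3}. Granting this, Karamata's inequality (the majorization inequality for convex functions) applied to the strictly convex weights $g(c)=\beta^c$ gives $G(\sigma(n))=\sum_v g(\sigma(n)_v)\ \ge\ \sum_v g(t_v)=G(t)$ for every $t\in S_n$, so $\sigma(n)$ is a maximizer of $G$; moreover the equality case of Karamata for a \emph{strictly} convex $g$ shows this maximizer is unique among non-decreasing sequences of the same total. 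Since $s$ was chosen as a maximizer of $G$ over $S_n$ and $\sigma(n)\in S_n$, this yields $s=\sigma(n)$.

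The heart of the argument is therefore the displayed prefix inequality, for which the clean route is to check that for every $k$
\[
\sum_{v=1}^{k}\sigma(n)_v\ =\ \max\Bigl\{\,3k,\ \ \binom{k}{2}+1,\ \ \binom{n}{2}-(n-k)(n-4)\,\Bigr\}.
\]
This comes out of the explicit four-case description of $\sigma(n)$: its leading block of scores $3$ realizes $3k$ for small $k$, its middle run $4,7,8,\dots$ makes the Landau bound $\binom{k}{2}+1$ tight for intermediate $k$ (a direct computation), and its trailing block of scores $n-4$, together with the value $n-5$, realizes $\binom{n}{2}-(n-k)(n-4)$ for large $k$ — and these three ranges of $k$ together cover $\{1,\dots,n\}$. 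On the other side, each of the three quantities is a lower bound for $\sum_{v\le k}t_v$ valid for \emph{every} $t\in S_n$: the first because $t_v\ge 3$ by \eqref{eqn:sbound4}, the second by the Landau inequality \eqref{eqn:sbound2}, and the third because $t_v\le n-4$ for all $v$ by \eqref{eqn:sbound4}, whence $\sum_{v>k}t_v\le (n-k)(n-4)$ and so $\sum_{v\le k}t_v\ge\binom{n}{2}-(n-k)(n-4)$ by \eqref{eqn:sbound3}. Comparing the two displays finishes the step; as a by-product it re-verifies the assertion, made when $\sigma(n)$ was defined, that $\sigma(n)\in S_n$.

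I expect the only real obstacle to be the large-$k$ regime. There the Landau inequalities alone are genuinely too weak — they give only $\binom{k}{2}+1$, which is strictly below $\sum_{v\le k}\sigma(n)_v$ once the trailing scores $n-4$ enter — so one must use the ceiling $s_v\le n-4$ that characterizes Case~3. The remaining work is the bookkeeping of where each of the three regimes governs, whose boundaries shift slightly for the small orders $n\in\{11,12,13\}$ because there $\sigma(n)$ has a different shape; for those three orders an alternative shortcut is to invoke Claims~\ref{thm:claim1}--\ref{thm:claim4}, which already fix the multiplicities of $s_1^*$ and $s_n^*$ and thereby leave at most one free score, pinned down by \eqref{eqn:sbound3}.
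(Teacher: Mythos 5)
Your proposal is correct, and it takes a genuinely different route from the paper. The paper derives $s=\sigma(n)$ by combining the local exchange arguments of Claims~\ref{thm:claim1}--\ref{thm:claim4} (no interior score repeats; $s_1^*$ and $s_n^*$ each occur exactly six times for $n\ge 12$, five times for $n=11$) with a final counting step: the remaining $n-12$ scores miss exactly four values of $\{4,\dots,n-5\}$, which \eqref{eqn:sbound3} forces into pairs $\{h_i,n-1-h_i\}$, \eqref{eqn:sbound2} at $k=8$ forces $h_1,h_2<7$, and maximality of $G$ leaves $h_1=5$, $h_2=6$. You instead prove the global statement that every $t\in S_n$ majorizes $\sigma(n)$ from below and then invoke Karamata together with its strict-convexity equality case; this delivers Lemma~\ref{thm:reclemma} and the uniqueness of the maximizer in one stroke, bypassing Claims~\ref{thm:claim1}--\ref{thm:claim4} entirely and even Lemma~\ref{thm:nummaxscore} (a seventh occurrence of $n-4$ already violates \eqref{eqn:sbound2} at $k=n-7$, since $\binom{n}{2}-7(n-4)=\binom{n-7}{2}$). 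The cost is the deferred bookkeeping, which does go through: for $n\ge 14$ the prefix sum of $\sigma(n)$ equals $3k$ for $k\le 6$, equals $\binom{k}{2}+1$ for $7\le k\le n-7$ (the entry at position $k$ is $k-1$ in that range), and equals $\binom{n}{2}-(n-k)(n-4)$ for $k\ge n-6$, and in each regime the other two lower bounds are dominated because the difference between the third bound and the Landau bound is $(-j^2+7j-2)/2$ with $j=n-k$, which is nonnegative exactly for $1\le j\le 6$; the shapes for $n\in\{11,12,13\}$ check out the same way. Two small repairs: the Landau term must be restricted to $k\le n-1$ (at $k=n$ it reads $\binom{n}{2}+1>\binom{n}{2}$, so your displayed identity fails literally at $k=n$), and the uniqueness conclusion should note that the Karamata equality case gives equality of multisets, which yields equality of sequences only because both are sorted non-decreasingly.
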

\begin{proof}
If $n=11$, $s$ has $5$ vertices of score $3$ and $5$ vertices of score $7$ by Claim~\ref{thm:claim4}. As, $\sigma(11)$ is the only such sequence not contradicting \eqref{eqn:sbound3}, the claim holds for $n=11$. Similarly, $\sigma(n)$ is the only sequence not contradicting \eqref{eqn:sbound2} and Claim~\ref{thm:claim3} if $12 \le n \le 13$. Suppose now that $n \ge 14$.
There are $n-12$ elements of $s$ being different from both $s_1^*$ and $s_n^*$, which have a score equal to one of the
$n-8$ numbers in the range $4,\hdots,n-5$.
Symmetry of the map $d\mapsto\binom{n}{d}$ around $d=\frac{n}{2}$ together with \eqref{eqn:sbound3}
means that only pairs $\{h_1,n-1-h_1\}$ with $4\leq h_1<\frac{n-1}{2}$ and
$\{h_2,n-1-h_2\}$ with $5\leq h_2<\frac{n-1}{2}$ of scores are missing in $s$.
Moreover, \eqref{eqn:sbound2} requires $h_1,h_2 < 7$, for otherwise $k=8$ violates this relation.
Since $s$ was chosen to be a maximizer of $G$, this leaves $h_1=5$ and $h_2=6$.
Thus $s=\sigma(n)$, completing the proof of the claim and of Lemma \ref{thm:reclemma}.
\end{proof}
%
%All cases taken together imply the upper bound of $\fvstub^n$ on the number of maximal transitive subtournaments in an $n$-vertex tournament.
%This completes the proof of Theorem~\ref{thm:combupperbound}.

All cases taken together imply the following upper bound on the number of maximal transitive subtournaments.
\begin{theorem}
\label{thm:combupperbound}
Any strong tournament $T\in\mathcal T^*_n$ has at most $\fvstub^n$ maximal transitive subtournaments.
\end{theorem}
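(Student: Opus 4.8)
The plan is to assemble Theorem~\ref{thm:combupperbound} as the capstone of the case analysis that precedes it, following the three-regime structure announced after Table~\ref{tab:smallextremal}. First I would settle the base cases: for $n \le 9$ the exact values in Table~\ref{tab:smallextremal} already give $M(n) \le \beta^n$, and for $n = 10$ the argument pivoting on a vertex of score $s_{10}$ yields $M(10) \le 2M(9) = 86 < 1.5612^{10} < \beta^{10}$. These establish the induction base. For the inductive step with $n \ge 11$, I would invoke the inequality $f(T) \le \sum_{v=1}^n M(s_v)$ derived from the source-removal observation, together with the bound $s_v \le n-2$, and split on the maximum score $s_n$ exactly along the three cases already laid out: $s_n = n-2$ (Case~1), $s_n = n-3$ (Case~2), and $s_n \le n-4$ (Case~3). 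In each case the earlier claims (Lemma~\ref{thm:nummaxscore}, Claim~\ref{cl:1356}, Claims~\ref{cl:2.n-3}--\ref{cl:4.n-4}, Lemma~\ref{thm:reclemma}) reduce $f(T)$ to a sum of terms $M(s_v)$ with smaller arguments, and the induction hypothesis replaces each $M(s_v)$ by $\beta^{s_v}$.

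The heart of the proof is then purely arithmetic: in every (sub)case I must verify that the resulting expression, a fixed positive linear combination $\sum_i \beta^{n - a_i}$ of powers of $\beta$ with small integer shifts $a_i$, is bounded above by $\beta^n$. Dividing through by $\beta^n$, each inequality takes the shape $\sum_i \beta^{-a_i} \le 1$, i.e.\ a polynomial inequality in $1/\beta$ that holds once $\beta$ exceeds some threshold; each case in the excerpt already records the relevant threshold (the largest being $\beta \ge 1.6740$ in Case~2.2), so it suffices to note that $\beta = \fvstub$ exceeds all of them. I would state explicitly that $\fvstub$ was chosen precisely as (an upper bound on) the maximum over all these case thresholds, which is why the bound is tight in form. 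For Case~3 specifically, I would first apply Lemma~\ref{thm:reclemma} to reduce to the extremal score sequence $\sigma(n)$, then evaluate $G(\sigma(n))$ via \eqref{eqn:uppernlarge}; for $n \ge 14$ the geometric-series bound $\frac{\beta^{n-7}}{\beta-1} + \beta^{n-5} + 6\beta^{n-4} \le \beta^n$ holds as soon as $\beta \ge 1.6259$.

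Finally I would note that the theorem, as stated, counts \emph{maximal transitive subtournaments} of a strong tournament, which by the correspondence $F \mapsto V \setminus F$ is exactly $f(T)$, so the above chain of inequalities gives $f(T) \le \beta^n$ for every strong $T \in \mathcal T^*_n$ and all $n$; this is precisely $M^*(n) \le \beta^n$, hence (via Observation~\ref{thm:fvsstrongdecomposition} and the fact that $\beta > 1$) also $M(n) \le \beta^n$ for general tournaments, though only the strong case is claimed here.

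The main obstacle is not conceptual but organizational and arithmetic: correctly bookkeeping, in Case~2, which maximal transitive vertex sets $W$ fall into which leaf of Fig.~\ref{fig:searchtree} and ensuring the five subcases on $(s_{b_1}, s_{b_2})$ are exhaustive and non-overlapping, and that the ``group leaves (2) and (4) together and use $M(n-1)$'' shortcut is only invoked when the scores are genuinely low. The delicate point is that several subcases themselves branch on finer structural features — whether $T[N^-(b_1)]$ is a directed cycle or transitive, whether $b_2$ beats a vertex of $N^-(b_1)$, whether $d_1 \in N^-(b_1)$ — and each branch must independently clear the $\beta^n$ bar. Once Claims~\ref{cl:2.n-3}--\ref{cl:4.n-4} are in hand, though, every branch collapses to one of the displayed inequalities, and the proof is complete by induction on $n$.
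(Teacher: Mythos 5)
Your proposal follows the paper's own proof essentially verbatim: the same base cases for $n\le 10$ (Table~\ref{tab:smallextremal} plus the $2M(9)$ pivot for $n=10$), the same split on $s_n$ into Cases 1--3 for $n\ge 11$ using Lemma~\ref{thm:nummaxscore}, Claims~\ref{cl:1356}--\ref{cl:4.n-4} and Lemma~\ref{thm:reclemma}, and the same arithmetic check that $\beta=\fvstub$ clears every per-case threshold. The only step you leave implicit is the arc-reversal symmetry ($f(T)=f(T')$ with $s_v(T')=n-1-s_v(T)$), which is what justifies assuming $s_1\ge 3$ in Case~3 so that Lemma~\ref{thm:reclemma} applies; the paper records this observation explicitly before the case analysis.
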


Moon \cite{Moon1971} already observed that the following limit exists.

\begin{corol}
 It holds $1.5448 \le \lim_{n \rightarrow \infty} (M(n))^{1/n} \le \fvstub$.
\end{corol}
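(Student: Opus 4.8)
The plan is to combine the two main results already established in the excerpt, namely the lower bound of Lemma~\ref{thm:manyminimalfvs} and the upper bound of Theorem~\ref{thm:combupperbound}, with Moon's observation that the limit $L := \lim_{n\to\infty} (M(n))^{1/n}$ exists. Granting the existence of the limit (which follows from Fekete-type superadditivity arguments applied to the quantity $\log M(n)$, using that $M(m+n) \geq M(m)M(n)$ by taking disjoint unions of extremal tournaments and invoking Observation~\ref{thm:fvsstrongdecomposition}), it remains only to sandwich $L$ between the two numerical constants.

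For the upper bound $L \le \fvstub$, I would argue as follows. Given any tournament $T \in \mathcal T_n$, factor it as $T = S_1 + \cdots + S_r$ into strong components; by Observation~\ref{thm:fvsstrongdecomposition}, $f(T) = \prod_i f(S_i)$. Each strong component $S_i$ on $n_i$ vertices satisfies $f(S_i) \le \fvstub^{n_i}$: for $n_i \ge 3$ this is Theorem~\ref{thm:combupperbound} (recalling that minimal FVSs of a strong tournament are in bijection with its maximal transitive subtournaments), and for $n_i \in \{1,2\}$ a strong tournament on $n_i$ vertices has $f(S_i) = 1 \le \fvstub^{n_i}$ trivially. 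Multiplying over all components gives $f(T) \le \fvstub^{\sum_i n_i} = \fvstub^{n}$, hence $M(n) \le \fvstub^n$ and $L \le \fvstub$.

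For the lower bound $L \ge 1.5448$, I would use the family constructed in Lemma~\ref{thm:manyminimalfvs}: for every $k \in \mathbb N$ there is a tournament on $n = 7k$ vertices with $21^k = 21^{n/7}$ minimal FVSs, so $M(7k) \ge 21^k$ and therefore $(M(7k))^{1/(7k)} \ge 21^{1/7} > 1.5448$. Since the limit $L$ exists, the subsequence $(M(7k))^{1/(7k)}$ converges to $L$ as well, whence $L \ge 21^{1/7} > 1.5448$. Combining the two bounds yields $1.5448 \le L \le \fvstub$, which is exactly the claimed statement.

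The only genuine subtlety — and the step I would flag as the "main obstacle," though it is more a matter of citing the right fact than of real difficulty — is justifying the existence of the limit $L$, since the corollary phrases it as something Moon already observed rather than something proved here. I would either cite Moon~\cite{Moon1971} directly for this, or spell out the one-line Fekete argument: the sequence $a_n := \log M(n)$ is superadditive ($a_{m+n} \ge a_m + a_n$) because placing an extremal $m$-vertex tournament entirely "above" an extremal $n$-vertex one produces an $(m+n)$-vertex tournament whose count of minimal FVSs is the product, so by Fekete's lemma $a_n/n$ converges to $\sup_n a_n/n$, and this supremum is finite by the upper bound just proved. With the limit in hand, both inequalities pass to the limit along the relevant subsequences and the corollary follows immediately.
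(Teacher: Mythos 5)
Your proposal is correct and matches the paper's (implicit) argument exactly: the corollary is stated without a written-out proof, being the immediate combination of Lemma~\ref{thm:manyminimalfvs}, Theorem~\ref{thm:combupperbound} together with Observation~\ref{thm:fvsstrongdecomposition} for non-strong tournaments, and Moon's observation that the limit exists via superadditivity of $\log M(n)$. Your explicit treatment of the trivial strong components on one or two vertices and the Fekete-lemma justification are exactly the details the paper leaves to the reader.
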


We conjecture that the Paley digraph of order 7, $ST_7$, plays the same role for \FVSs in tournaments as triangles play for independent sets in graphs, i.e. that the tournaments $T$ maximizing $(f(T))^{1/|V(T)|}$ are exactly those whose factors are copies of $ST_7$.

\section{Polynomial-Delay Enumeration in Polynomial Space}
\label{sec:polydelaypolyspace}
In this section, we give a polynomial-space algorithm for the enumeration of the minimal \FVSs in a tournament with polynomial delay.

Let $T = (V,A)$ be a tournament with $V = \{v_1,\hdots,v_n\}$, and for each $i = 1,\hdots,n$ let $T_i = T[\{v_1,\hdots,v_i\}]$.
For a vertex set $X$, we write $\chi_X(i)=1$ if $v_i\in X$ and $\chi_X(i)=0$ otherwise.
Let $<$ denote the total order on $V$ induced by the labels of the vertices.
For vertex sets $X,Y\subseteq V$, say that $X$ is \emph{lexicographically smaller} than $Y$ and write $X\prec Y$ if for the minimum index $i$ for which $\chi_X(i)\not=\chi_Y(i)$ it holds that $v_i\in X$.
Because $X$ and $Y$ are totally ordered by the restriction of $<$ to $X$ and~$Y$, respectively, $\prec$ is also a total order and each collection of subsets of $V$ has a unique \emph{lexicographically smallest} element.

The algorithm enumerates the maximal acyclic vertex sets of $T$.
It performs a depth-first search in a tree $\mathcal T$ with the maximal acyclic vertex sets of $T$ as leaves, whose forward and backward edges are constructed ``on the fly''.
The depth of $\mathcal T$ is $|V|$, and we refer to the vertices of $\mathcal T$ as \emph{nodes}. The algorithm only needs to keep in memory the path from the root to the current node in the tree and all the children of the nodes on this path.
Each node at level $j$ is labeled by a maximal acyclic vertex set $J$ of $T_j$.
As for its children, there are two cases.
In case $J\cup\{v_{j+1}\}$ is acyclic then $J$'s only child is $J\cup\{v_{j+1}\}$.
In case $J\cup\{v_{j+1}\}$ is not acyclic then $J$ has at least one and at most $\lfloor j/2 \rfloor + 1$ children.
Let $L_J = (v^1,v^2,\hdots,v^{|J|})$ be a labeling of the vertices in $J$ such that $(v^r,v^s)\in A$ for all $1\leq r < s\leq j$; we view $L_J$ as a sequence of vertices.
The children of $J$ are as follows.
The first child $J^0$ is a copy of $J$, and is always present.
The potential other children are, for $1\le z \le |J|+1$,
\begin{equation*}
  J^z = \{v^i \in J \mid i<z \wedge v^i \rightarrow v_{j+1}\} \cup \{v_{j+1}\} \cup \{v^i \in J \mid i\ge z \wedge v_{j+1} \rightarrow v^i\}
\end{equation*}
where set $J^z$ is a potential child of $J$ only if $J^z$ is a maximal acyclic vertex set in $T_{j+1}$.
To check the maximality of $J^z$, compute a transitive order of its vertices.
The set $J^z$ is not maximal if there exists a vertex $u\in \{v_1, \dots, v_j \} \setminus J^z$ such that there exists $k, 0\le k\le |J^z|$,
such that the first $k$ vertices in the transitive order dominate $u$ and $u$ dominates all other vertices from $J^z$.
The maximality check can therefore be done in $O(n^2)$ time.

Note how we try to insert $v_{j+1}$ at every possible position in $J$ to compute $J^z$. However, only at most $\lfloor j/2 \rfloor+1$ positions make sense for $v_{j+1}$: before $v^1$ if $v_{j+1} \rightarrow v^1$, between $v^i$ and $v^{i+1}$ if $v^i \rightarrow v_{j+1} \rightarrow v^{i+1}$, where $1\le i\le |J|-1$, and after $v^{|J|}$ if $v^{|J|}\rightarrow v_{j+1}$; all other positions do not give maximal acyclic vertex sets and should not be generated in an actual implementation.
Note that $J^z$ may be a potential child of several sets on the same level in $\mathcal T$.
Of all these sets, $J^z$ is made the child only of the lexicographically smallest such set.
To determine whether $J$ is the lexicographically smallest such set, we compute by a greedy algorithm the lexicographically smallest maximal acyclic vertex set $H = H(J^z)$ of $T_j$ which contains $J^z\setminus\{v_{j+1}\}$ as a subset.
That is, we iteratively build the set $H$ by setting
\begin{align*}
  H_0 & = J^z\setminus\{v_{j+1}\},\displaybreak[0]\\
  H_i & = \begin{cases}
    H_{i-1}\cup\{v_i\},&\mbox{if}~H_{i-1}\cup\{v_i\}~\mbox{is acyclic},\\
    H_{i-1},&\mbox{otherwise},
  \end{cases}
  \qquad i = 1,\hdots,j,\displaybreak[0]\\
  H & = H_j \enspace.
\end{align*}
Then we make $J^z$ a child of the node labeled $J$ only if $H = J$. The set $H$ can be computed in $O(n^2)$ time.
This completes the description of the algorithm.

\medskip

To show that the algorithm is correct, we prove that for every maximal acyclic vertex set $W$ of $T$ there is exactly one leaf in $\mathcal T$ labeled with $W$.
By construction of the algorithm, it suffices to show that at least one leaf is labeled by $W$.
The proof is by induction on the number $n = |V|$ of vertices in $T$.
For $n = 1$ the claim clearly holds, so suppose that $n > 1$ and that the claim is true for all tournaments with fewer vertices.
Then from the induction hypothesis we can conclude that for the induced subtournament $T' := T_{n-1}$ there is a tree $\mathcal T'$ constructed by the above algorithm and a bijection $f'$ from the maximal acyclic vertex sets of $T'$ to the leaves of~$\mathcal T'$.

Let $W$ be a maximal acyclic vertex set of $T$.
If $v_n\notin W$ then $W$ is an acyclic vertex set of $T'$ as removing a vertex from a digraph does not introduce cycles.
In fact, $W$ is a maximal acyclic vertex set of $T'$: for any vertex $v_\ell \in V\setminus (W\cup \{v_n\})$, $T'[W\cup \{v_\ell \}]$ has a cycle as $W$ is a maximal acyclic vertex set for $T$ and $T'[W\cup \{v_\ell \}]=T[W\cup \{v_\ell \}]$.
Hence there exists a leaf $f'(W)$ in $\mathcal T'$ labeled by $W$.
Since $W\cup\{v_n\}$ is not acyclic, by maximality of $W$ for $T$, the algorithm constructs the child $W^0$ of $f'(W)$ labeled by $W$, and that child will be a leaf in the final tree constructed by the algorithm.

If $v_n\in W$, then let $W' = W\setminus\{v_n\}$. So, $W'$ is an acyclic vertex set of~$T'$.
In case $W'$ is maximal for $T'$, there is a leaf $f'(W')$ in $\mathcal T'$ that is labeled by $W'$.
Since $W'\cup\{v_n\}$ is acyclic, the algorithm will create a single child of $f'(W')$ labeled by $W'\cup\{v_n\} = W$, and that child will be a leaf in the final tree constructed by the algorithm.
In case $W'$ is not maximal for $T'$, let $N$ be the lexicographically smallest extension of $W'$ to a maximal acyclic vertex set of $T'$.
Hence there exists a leaf $f'(N)$ in the tree $\mathcal T'$ labeled by $N$.
Observe that the sequence $L_{W'}$ is a subsequence of $L_N$, and that $N\cup\{v_n\}$ is not acyclic.
Hence the algorithm creates children $N^1,N^2,\hdots$, one of which will be labeled by~$W$.

\medskip

We show that the algorithm requires only polynomial space.
We already observed that each node in $\mathcal T$ at level $j$ has at most $\lfloor j/2 \rfloor+1$ children.
For each node we store the maximal acyclic vertex set by which it is labeled.
Because we are traversing $\mathcal T$ in a depth-first-search manner, in each step of the algorithm we only need to save data of $O(n^2)$ nodes: those of the $O(n)$ nodes on the path from the root to the currently active node labeled by $J$, and the $O(n)$ children for each node on this path.

To see that the algorithm runs with polynomial delay, note that the set of all children of a given node in $\mathcal T$ can all be computed in $O(n^3)$ time.
It follows that $\mathcal T$ can be traversed in a depth-first manner with polynomial delay per step of the traversal, and thus the leaves of $\mathcal T$ can be output with only a polynomial delay, bounded by $O(n^4)$.

\begin{theorem}
 The described algorithm enumerates all \FVSs of a tournament with polynomial delay and uses polynomial space.
\end{theorem}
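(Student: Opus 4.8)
The plan is to assemble the theorem from the three properties already established for the algorithm above (correctness, bounded delay, bounded space), after first recording the routine translation between the objects it manipulates. Recall that a set $F$ is a minimal \FVS of $T$ exactly when $V\setminus F$ is a maximal acyclic (equivalently, maximal transitive) vertex set of $T$; hence the algorithm, reporting the complement $V\setminus W$ of each enumerated maximal acyclic vertex set $W$, lists precisely the minimal \FVSs of $T$, and it does so without repetition provided each maximal acyclic vertex set is output exactly once. So it suffices to argue that the depth-first traversal of $\mathcal T$ outputs each maximal acyclic vertex set of $T$ exactly once, with polynomial delay and in polynomial space.

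For correctness, the induction on $|V|$ carried out above shows that every maximal acyclic vertex set $W$ of $T$ labels \emph{at least one} leaf of $\mathcal T$; it remains to see that it labels \emph{at most one}, which is where the canonical-parent rule matters. A potential child $J^z$ is attached only below the node of level $j$ whose label equals $H(J^z)$, the lexicographically smallest maximal acyclic vertex set of $T_j$ containing $J^z\setminus\{v_{j+1}\}$, and $H(J^z)$ is uniquely determined by $J^z$. An induction on the level then shows that no maximal acyclic vertex set of $T_j$ can label two distinct nodes at level $j$: the labels at level $j$ are exactly the distinct sets $J^0$ and $J^z$ produced from the (by hypothesis distinct) level-$(j-1)$ labels, and the tie-breaking rule assigns each such set a unique parent. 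Taking $j=n$ gives a bijection between the maximal acyclic vertex sets of $T$ and the leaves of $\mathcal T$.

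For the delay, note that every node at level $j<n$ has at least the child $J^0$, so $\mathcal T$ has depth exactly $n$ and every internal node has a nonempty subtree beneath it; in particular the search never descends into a branch containing no leaf. Given a node, computing its parent and its at most $\lfloor j/2\rfloor+1$ children — including the maximality tests and the evaluation of $H(\cdot)$ by the greedy procedure — takes time polynomial in $n$. Between two consecutive leaf emissions the traversal moves along a path of at most $2n$ edges of $\mathcal T$, inspecting $O(n)$ children at each visited node, each in polynomial time; thus the work before the first output, between consecutive outputs, and after the last output is polynomial in $n$. For the space bound, the children of a node are a deterministic function of its label, so on backtracking it suffices to store, for each node on the current root-to-active-node path, its label, the list of its generated children, and a pointer to the child under exploration; the path has length at most $n$, a level-$j$ node has $O(n)$ children, and each label is a subset of $V$ of size at most $n$, so $O(n^3)$ space suffices. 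Combining the three parts yields the theorem.

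I expect the only genuinely delicate point to be the uniqueness half of correctness: ensuring via the lexicographically smallest extension $H(J^z)$ that, among the several level-$j$ nodes that could claim a given $J^z$ as a child, all defer to the same one, and checking that this is precisely the rule the depth-first procedure recomputes when it regenerates children after backtracking. The delay and space bounds are then routine bookkeeping on a depth-$n$ tree with polynomial-time local operations.
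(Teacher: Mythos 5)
Your proposal is correct and follows essentially the same route as the paper: induction on $n$ for the existence of a leaf labelled by each maximal acyclic vertex set, the canonical-parent rule via $H(\cdot)$ for uniqueness (which the paper dispatches with ``by construction''), and polynomial-time local computations along a depth-$n$ DFS path for the delay and space bounds. The only (harmless) imprecision is that when $J\cup\{v_{j+1}\}$ is acyclic the unique child is $J\cup\{v_{j+1}\}$ rather than $J^0=J$; every internal node still has at least one child, which is all your delay argument needs.
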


\begin{corol}
\label{thm:minfvspolyspace}
  In a tournament with $n$ vertices a minimum directed feedback vertex set can be found in $O(1.6740^n)$ time and polynomial space.
\end{corol}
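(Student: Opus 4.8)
The plan is to run the enumeration algorithm of the theorem immediately preceding this corollary, which lists every minimal \FVS of the input tournament $T$ on $n$ vertices with polynomial delay and in polynomial space, and simply to retain a smallest set encountered during the enumeration. First I would note that a minimum \FVS is in particular a minimal \FVS: if a minimum-cardinality \FVS $F$ properly contained another \FVS $F'$, then $|F'|<|F|$ would contradict the minimality of $|F|$. Hence the collection $\mathcal F(T)$ produced by the algorithm contains at least one minimum \FVS, so it suffices to output a set of smallest cardinality among those enumerated, breaking ties arbitrarily.

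Next I would bound the running time. Write $T=S_1+\hdots+S_r$ for the factorization of $T$ into strong subtournaments. By Observation~\ref{thm:fvsstrongdecomposition}, $f(T)=\prod_{i=1}^r f(S_i)$, and by Theorem~\ref{thm:combupperbound} each factor satisfies $f(S_i)\le\fvstub^{|V(S_i)|}$ (trivially so when $|V(S_i)|\le 2$); since $\sum_i |V(S_i)|=n$ this yields $f(T)\le\fvstub^n$. Thus the search tree $\mathcal T$ built by the enumeration algorithm has at most $\fvstub^n$ leaves, and since all leaves lie at depth $n$ it has at most $n\cdot\fvstub^n$ nodes in total. As the algorithm traverses $\mathcal T$ in depth-first order, emitting each leaf with only polynomial delay and performing only polynomial work per node, the total running time is $\mathrm{poly}(n)\cdot\fvstub^n$. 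Maintaining the current best set costs only $O(n)$ additional space, so the polynomial space bound inherited from the enumeration algorithm is preserved.

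The only point requiring care — and the closest thing to an obstacle — is passing from $\mathrm{poly}(n)\cdot\fvstub^n$ to the claimed $O(1.6740^n)$: here one uses that the base $\fvstub=1.6740$ in Theorem~\ref{thm:combupperbound} is a rounded-up upper bound lying strictly above every binding inequality in the case analysis of Section~\ref{sec:upperbound}, hence strictly above the true growth rate $\lim_{n\to\infty}(M(n))^{1/n}$ bounded in the preceding corollary; this positive margin absorbs the polynomial factor. (Equivalently, one may first state the bound in $O^{\ast}$ notation and then observe that $1.6740$ already carries the requisite slack.) Everything else is routine: the correctness of the enumeration and its polynomial-delay, polynomial-space guarantees are exactly the content of the preceding theorem, and the combinatorial bound $f(T)\le\fvstub^n$ is immediate from Theorem~\ref{thm:combupperbound} together with the multiplicativity in Observation~\ref{thm:fvsstrongdecomposition}.
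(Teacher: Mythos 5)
Your proposal is correct and is exactly the argument the paper intends (the corollary is stated without an explicit proof, as an immediate consequence of the enumeration theorem and Theorem~\ref{thm:combupperbound} via Observation~\ref{thm:fvsstrongdecomposition}): enumerate all minimal \FVSs with polynomial delay and polynomial space, of which there are at most $\fvstub^n$, and keep a smallest one. Your remarks on minimum implying minimal, on handling non-strong tournaments through the factorization, and on the rounded-up base absorbing the polynomial factor are all appropriate and match the standard convention the authors rely on.
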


\paragraph{Acknowledgment.} We thank Gerhard J. Woeginger for help with the presentation of the results.

{\small
\bibliographystyle{abbrv}

}

\end{document}